\newcommand{\IK}{{\cal I \cal K}}
\newcommand{\IIK}{{\cal I \cal I \cal K}}
\newcommand{\bmath}[1]{ {\boldmath $ {#1}$}}
\newcommand{\argmax}{\mathop{\rm argmax}}
\newcommand{\conv}{\mathop{\rm conv}}
\newcommand{\R}{\mathbb{R}}
\begin{document}

\title{Approximation Algorithms for the Incremental Knapsack Problem via Disjunctive Programming}
\author{
        Daniel Bienstock, Jay Sethuraman, Chun Ye 
       }
\institute{Department of Industrial Engineering and Operations Research\\
        Columbia University\\}
\date{\today}
\maketitle

\begin{abstract}
In the \emph{incremental knapsack problem} ($\IK$), we are given a knapsack whose capacity grows weakly as a function of time. There is a time horizon of $T$ periods and the capacity of the knapsack is $B_t$ in period $t$ for $t = 1, \ldots, T$. We are also given a set $S$ of $N$ items to be placed in the knapsack. Item $i$ has a value of $v_i$ and a weight of $w_i$ that is independent of the time period. At any time period $t$, the sum of the weights of the items in the knapsack cannot exceed the knapsack capacity $B_t$. Moreover, once an item is placed in the knapsack, it cannot be removed from the knapsack at a later time period. We seek to maximize the sum of (discounted) knapsack values over time subject to the capacity constraints. We first give a constant factor approximation algorithm for $\IK$,  under mild restrictions on the growth rate of $B_t$ (the constant factor depends on the growth rate). We then give a PTAS for $\IIK$, the special case of $\IK$ with no discounting, when $T = O(\sqrt{\log N})$. 
\end{abstract}
\newpage

\section{Introduction}
Traditional optimization problems often deal with a setting where the input parameters to the optimization problem are static. However, the static solution that we obtain may be inadequate for a system whose parameters---the inputs to our optimization problem---change over time. We consider one special case of this dynamic environment in which we have a maximization problem subject to certain capacity constraints. All of the inputs to the optimization problem are static except the capacities, which increase weakly over time. The goal is to find a sequence of compatible feasible solutions over time that maximizes a certain aggregate objective function. We will call such an optimization problem \emph{an incremental optimization problem}. Unlike online and stochastic optimization problems, there is no uncertainty in the input parameters for the optimization.

In this paper we consider the {\em incremental knapsack problem}, which is a particular case of the incremental optimization problem. In the \emph{discrete} incremental knapsack problem, we are given a knapsack whose capacity grows as a function of time. There is a time horizon of $T$ periods and the capacity of the knapsack is $B_t$ in period $t$ for $t = 1, \ldots, T$. We are also given a set $S$ of $N$ items to be placed in the knapsack. Item $i$ has a weight $w_i > 0$ that is independent of the time period, and a value at time $t$ of the form $v_i \Delta_t$ where $v_i > 0$ and $\Delta_t > 0$ (this particular
functional form will allow us to model discounting). At any time period $t$, the sum of the weights of the items in the knapsack cannot exceed the knapsack capacity $B_t$. Moreover, once an item is placed in the knapsack, it cannot be removed from the knapsack at a later time period. We are interested in maximizing the sum over the $T$ time periods of the total value of the knapsack in each time period.  

To put it formally, for $X \subseteq S$ define $V(X)$ to be $\sum_{i \in X} v_i$ and $W(X)$ to be $\sum_{i \in X} w_i$. Then we are interested in finding a feasible solution $F = \{S_1, S_2, \ldots, S_T\}$ and $S_1 \subseteq S_2 \ldots, S_T \subseteq S$, where $S_t$ represents the subset of items in the knapsack in period $t$, that maximizes the quantity $\sum_{t = 1}^{T} V(S_t) \Delta_t$ subject to the constraints $W(S_t) \leq B_t$ for $t = 1, \ldots, T$.   The special case where $\Delta_t = 1$ for all $t$ will be called \textit{time-invariant}.
For brevity, in what follows we will denote the incremental knapsack problem as \bmath{\IK}, and its time-invariant version as \bmath{\IIK}.

One can also consider a continuous version of the problem.  Here we assume that we have a continuous parameter time 
parameter $s \in [0, S]$ for some $S > 0$. We are given a knapsack capacity function $B(s)$, weakly increasing with respect to $s$, and a set $K$ of $N$ items to be placed in the knapsack. Item $i$ has a value of $v_i$ and a weight of $w_i$, both time independent. At any time $s$, the sum of the weights of the items in the knapsack cannot exceed the knapsack capacity $B(s)$. Moreover, once an item is placed in the knapsack, it cannot be removed from the knapsack at a later time. We are interested in finding a feasible solution $F = \{K(s)\}_{s \in [0,S]}$ that maximizes the quantity $\int_{s = 1}^{S} V(K(s))ds$,
where $V(K(s))$ is the total value of the items found in the knapsack at time $s$, under $F$.  This problem can be approximated by partitioning $[0,S]$ into a finite set of intervals of length $\Delta_t$, $t = 1, \ldots, T$.  Under mild assumptions on
the capacity function $B(s)$, the approximation provided by this discretization can be made arbitrarily tight. 

As the single period knapsack problem is already known to be NP-hard, we consider polynomial time approximation algorithms for $\IK$. For a maximization problem, a \emph{$k$-approximation algorithm} (for some $k \leq 1$) is a polynomial time algorithm that guarantees, for all instances of the problem, a solution whose value is within $k$ times the value of an optimal solution. Moreover, we say that the maximization problem has a (fully) polynomial time approximation scheme, or a PTAS (FPTAS respectively), if for every $0 \leq \epsilon < 1$, the algorithm guarantees, for all instances of the problem, a solution whose value is within $1-\epsilon$ times the value of an optimal solution. Moreover, the algorithm runs in polynomial time in the size of the inputs (and $\epsilon$) for every fixed $\epsilon$.

\subsection{Related Work}

The special case of $\IIK$ where $v_i = w_i$  for all $i$ 
has been examined in the literature. This problem is known as the \emph{incremental subset sum problem}. Hartline \cite{Hartline08}  gave a $1/2$-approximation algorithm for the incremental subset sum problem via dynamic programming. Sharp \cite{Sharp07} gave a PTAS for the incremental subset sum problem 
that applies when $T$ is a constant. This algorithm uses a variant of the dynamic programming algorithm for the standard (i.e., 1-period) knapsack problem, and it runs in time $O((\frac{VN}{\epsilon})^{T})$, where $V = \max_{i}\{ v_i\}$. 

Further, it can be shown that the incremental subset sum is strongly NP-hard by a reduction from the $3$-partition problem (proof provided in the Appendix). 

\begin{proposition} \label{Strongly NP-hard}
The incremental subset sum problem is strongly NP hard.
\end{proposition}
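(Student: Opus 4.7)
The plan is to reduce from the classical 3-partition problem, which is strongly NP-hard. An instance of 3-partition consists of $3m$ positive integers $a_1, \ldots, a_{3m}$ with $\sum_i a_i = mB$ and $B/4 < a_i < B/2$ for each $i$; the question is whether the $a_i$ can be partitioned into $m$ triples each summing to $B$. Since $B$ is bounded by a polynomial in the input size, any reduction whose numerical parameters are bounded by a polynomial in $m$ and $B$ suffices to transfer strong NP-hardness.

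Given such a 3-partition instance, I would construct an incremental subset sum instance with $N = 3m$ items where item $i$ has value and weight both equal to $a_i$, time horizon $T = m$, and capacities $B_t := tB$ for $t = 1, \ldots, T$. Note these capacities are polynomially bounded in the 3-partition input size, so the reduction preserves strong NP-hardness. I claim the constructed instance admits a feasible solution of objective value at least $\frac{B m(m+1)}{2}$ if and only if the original 3-partition instance is a ``yes'' instance.

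For the forward direction, given any 3-partition $P_1, \ldots, P_m$ with $\sum_{i \in P_k} a_i = B$, define $S_t := P_1 \cup \cdots \cup P_t$. Then $W(S_t) = tB = B_t$, the nesting $S_1 \subseteq \cdots \subseteq S_T$ holds, and the objective equals $\sum_{t=1}^m tB = \frac{Bm(m+1)}{2}$. For the reverse direction, note that any feasible solution satisfies $W(S_t) \leq tB$ for every $t$, so $\sum_t W(S_t) \leq \frac{Bm(m+1)}{2}$, and equality forces $W(S_t) = tB$ for every $t$. Setting $S_0 = \emptyset$, this implies $\sum_{i \in S_t \setminus S_{t-1}} a_i = B$ for each $t$. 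Since every $a_i$ lies strictly between $B/4$ and $B/2$, any subset summing to exactly $B$ must contain exactly three elements (two elements sum to less than $B$; four elements sum to more than $B$). Thus the sets $S_t \setminus S_{t-1}$ form a 3-partition of the $a_i$.

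The main conceptual work is the reverse direction, but it is not really an obstacle: the argument hinges only on the fact that the capacity increments are exactly $B$ and that the bounds $B/4 < a_i < B/2$ in the 3-partition instance force the increment subsets to be triples. The polynomial boundedness of all numerical parameters in the reduction then yields strong NP-hardness of the incremental subset sum problem, and hence of $\IIK$.
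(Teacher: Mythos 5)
Your proof is correct and follows essentially the same route as the paper: a reduction from 3-partition with item weights/values $a_i$, capacities $B_t = tB$, and the observation that the bounds $B/4 < a_i < B/2$ force each capacity increment to be filled by exactly three items. Your write-up is in fact more complete than the paper's (which leaves the threshold value and the equality-forcing argument implicit), but the underlying argument is identical.
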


Consequently, the classic result of Garey and Johnson \cite{Garey1990} rules out an FPTAS for the incremental subset sum problem (and hence for $\IIK$)  both unless $P=NP$. 

A well-studied problem related to $\IIK$  is the \emph{generalized assignment problem} (GAP). In the 
generalized assignment problem, we are given a set of $m$ knapsacks and $N$ items, with knapsack $j$ having a capacity $b_j$. Further, placing item $i$ in knapsack $j$ consumes $w_{ij}$ units of capacity of  knapsack $j$, and generates a value of $v_{ij}$. 
Notice that a variant of  $\IIK$ where one is only allowed to pack, at each time $t$, an additional $B_{t+1} - B_{t}$ units, is a special case of the generalized assignment problem:
Here, we would set $b_t = B_{t+1} - B_{t}$ and $w_{it} = w_i$ for all $i$ and $v_{it} = (T-t+1)v_i$ for all $i$ and $t$. However, 
$\IIK$ is not a special case of GAP because in $\IIK$ we are allowed to pack more than
$B_{t+1} - B_{t}$ units at time $t$, assuming the knapsack has spare capacity from earlier time periods.
Approximation algorithms for the generalized assignment problem have been studied by \cite{FleischerEtAl06},\cite{ShmoysT93}. The best known constant factor algorithm is due to  Fleischer et. al. \cite{FleischerEtAl06} with  an approximation ratio of $(1- 1/e - \epsilon)$. They also showed that no approximation algorithm can beat the lower bound of $(1- 1/e)$ unless $NP \subseteq DTIME(n^{O(\log \log(n)})$. Unfortunately, these results are not directly applicable to $\IIK$, because the knapsack capacities cannot be decomposed over time.

A special case of the generalized assignment problem where the items' weight and value are identical across knapsacks is known as the \emph{multiple knapsack problem (MKP)}; for this problem, Chekuri and Khanna \cite{ChekuriK05} developed a PTAS. Moreover, they also showed that two mild generalization of the MKP---
$w_{ij} \in \{w_{i1}, w_{i2}\}$ and $v_{ij} = v_i$ or $v_{ij} \in \{v_{i1}, v_{i2}\}$ and $w_{ij} = w_i$---
are APX hard, thus ruling out a PTAS for these generalizations, assuming $P \not = NP$. Again, neither the PTAS nor their hardness results are directly applicable to the $\IIK$.

\subsection{Our Contributions}
Our first result is a constant-factor approximation algorithm for $\IK$ under some mild assumptions on the growth rate of $B_t$. This algorithm rounds the solution to a polynomial-size linear programming relaxation to the problem, specifically,
a disjunctive formulation (background and details, below). It is worth noting that, as is shown in Section \ref{LP relax}, a standard formulation for the time-invariant incremental knapsack problem has an unbounded integrality gap---this is unlike the case for the standard knapsack problem. To the best of our knowledge, no constant factor approximation algorithm is known for $\IK$ before this work. The previous best algorithm is a general purpose approximation algorithm for incremental optimization problems due to Hartline and Sharp \cite{HartlineS06}, which yield a $O(1/\log T)$ approximation ratio. 

Our second result provides a PTAS for $\IIK$ and when $T = O(\sqrt{\log N})$.  This approximation scheme involves a different 
disjunctive formulation that can be rounded to obtain the desired approximation.  Specifically, we 
construct a disjunction over $O(N((1/\epsilon + T)^{O(\log(T/\epsilon)/\epsilon^2)}))$ LPs, each with $NT$ variables and $O(NT)$ constraints. This improves on the result of Sharp \cite{Sharp07}.  Moreover, the analysis of the approach extends for certain $\Delta_t$ such as when $\Delta_t = e^{-rt}$ for some $r > 0$.  This allows us to incorporate discounting.
This PTAS also extends the earlier work of Bienstock and McClosky  \cite{Bienstock08}, \cite{BienstockM12} on the disjunctive approach for the single period knapsack problem.

Both of our algorithms rely on the classical approach of disjunctive programming \cite{Balas75}. Suppose we want to 
find  an approximate solution to $\max \{ w^{T}x \, : \, x \in P\}$ ($P \subseteq \R^n$, possibly non-convex), with approximation factor $k$. Moreover, the difficulty of the problem lies in that no good convex relaxation of $P$ is known. In this case, we may still be able to leverage the idea of disjunctive programming to give us a good approximation guarantee. The idea is to find a set of polyhedra $Q^{1}, Q^{2}, \ldots Q^{L}$ in $\R^n$ such that $P \subseteq \cup_{i=1}^{L} Q^{i}$ and for each $i$ we can compute, in polynomial time, $x^{i} \in P$ with $w^{T}x^{i} \geq k \max \{ w^{T}x \, : \, x \in Q^{i}\}$. Taking $x^{\star} = \argmax_{i} \{ w^{T}x^{i} \}$ yields a factor $k$ approximate solution to the original optimization problem. As stated, this approach simply constitutes a case of enumeration (polynomially-bounded if $L$ is polynomial). Further, $w^T x^{\star} \ge k \max \{ w^T x \, : \, x \in \conv( \cup_i Q^i ) \}$,  and this last maximization problem can be formulated as a single linear program (polynomial-sized if $L$ is), and, as will be the case below, we obtain an an approximation algorithm based on rounding.

The rest of the paper is organized as follows. In section \ref{LP relax}, we show that the natural IP formulation of the time-invariant incremental knapsack problem has an unbounded integrality gap. In section \ref{constant alg}, we give the constant factor approximation algorithm for $\IK$. In section \ref{PTAS}, we show a PTAS for $\IIK$. In section \ref{Conclusion}, we summarize our results and give suggestions for future directions.

\section{LP Relaxation and Integrality Gap} \label{LP relax}
To motivate the disjunctive approach, we will first show that the LP relaxation of a natural IP formulation for $\IIK$  has an unbounded integrality gap. This result implies that any constant factor approximation algorithm must do something more clever than simply solving the LP relaxation and rounding the fractional solutions to a feasible integral solution.

Let $x_{i,t} = 1$ if item $i$ is placed in the knapsack at time $t$ and $0$ otherwise. We can formulate $\IK$ as the following binary integer program, whose feasible region will be denoted by $P$. 

\begin{align}
\text{IP = max} \quad \label{IKP IP}
\sum_{t =1}^{T} \sum_{i=1}^{N} v_i  \Delta_t x_{i,t}\\
\text{s.t.} \quad
\sum_{i = 1}^{N} w_ix_{i,t} \leq B_t \quad &\forall t \notag \\
x_{i, t-1} \leq x_{i, t} \quad \forall i, \text{and} &\ t = 2, 3, \ldots, T \notag \\
x_{i, t-1} \in \{0,1\} \quad \forall i, t. \notag \\  \notag
\end{align}

Consider the case where $B_t = t$ for all $t$,
$N = 1$ and $v_1 = w_1 =T$. Further assume $\Delta_t = 1$ for all $t$, i.e. we are consider the time-invariant case. Clearly (\ref{IKP IP}) has value $T$ in this instance, whereas the LP relaxation gives a value of $\frac{T(T+1)}{2}$, which implies that the integrality gap of $O(T)$ which is unbounded as $T \rightarrow \infty$. 

A natural idea is to strengthen the LP relaxation by setting $x_{it} = 0$ if item $i$ does not fit into the knapsack at time $t$. This strengthened LP relaxation, shown below, still has an unbounded integrality gap (as shown in the Appendix).

\begin{theorem} \label{Integrality Gap}
The following LP relaxation to $\IIK$  has an integrality gap that cannot be bounded by any constant. 
\begin{align}
\text{max} \quad
\sum_{t =1}^{T} \sum_{i=1}^{N} v_ix_{i,t} \notag \\
\text{s.t.} \quad
\sum_{i = 1}^{N} w_ix_{i,t} \leq B_t \quad &\forall t \notag \\
x_{i, t-1} \leq x_{i, t} \quad \forall i, \text{and} &\ t = 2, 3, \ldots, T \notag \\
x_{i,t} = 0 \ \quad \text{for any $i$,} & \ \text{$t$ such that } \ w_i > B_t \notag \\
x_{i, t-1} \in [0,1] \quad \forall i, t. \notag \\  \notag
\end{align}
\end{theorem}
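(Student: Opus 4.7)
My plan is to construct, for each constant $C$, an instance of $\IIK$ whose strengthened LP exceeds the IP optimum by a factor greater than $C$. The proof must be subtler than the construction preceding the theorem, because the strengthening constraint $x_{i,t}=0$ when $w_i>B_t$ immediately neutralizes the single-item example used in the unstrengthened case: once that one heavy item is zeroed in every period except possibly the last, the LP and IP coincide.

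The strategy is to use many items rather than one, chosen so that every item fits individually in some period (so the strengthening is inactive for each LP variable actually used), but so that monotonicity prevents the IP from integrally including more than a few of them together. I would combine (i) small ``backbone'' items of weight and value roughly $1$ that fit in every period, together with (ii) several ``premium'' items whose weights are close to $B_t$, so that no two premium items can be integrally present simultaneously. The capacity profile $B_t$ grows just enough to activate premium items at progressively later times. With careful parameter tuning, the LP will be able to fractionally include premium items across many periods, accumulating value as the number of premium items $K$ grows, while the IP is forced to commit to essentially one premium item.

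The proof would then proceed in three steps. First, I would upper bound the IP optimum by case analysis on monotone integer trajectories: committing to backbone items forecloses including any premium item (because backbone weight plus premium weight exceeds $B_t$), so the IP is essentially the maximum over the two strategies. Second, I would exhibit a fractional monotone solution that includes backbone items fully from period $1$ while allocating a small fractional mass to each premium item at its activation time; the per-period capacity constraint is satisfied because each premium fraction is $O(1/K)$. Third, I would show that the LP value equals the backbone contribution plus the sum of fractional premium contributions, which in the right regime exceeds the IP optimum by a factor $\Omega(K)$, unbounded as $K$ grows with the instance.

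The main obstacle I anticipate is tuning the parameters simultaneously so that (a) the premium items are \emph{genuinely} mutually exclusive in the IP, not merely pairwise exclusive (otherwise a greedy IP could recover much of the LP's premium contribution); (b) the backbone does not dominate the LP objective on its own (otherwise LP and IP would agree up to a constant); and (c) monotonicity of the LP fractional solution holds across the staggered activation times of the premium items. Resolving all three simultaneously likely requires a geometric schedule such as $B_t = 2^t$ together with premium items whose weights are staged just below the respective thresholds, so that each premium item is individually usable in all late periods yet monotonicity rules out joint integer use.
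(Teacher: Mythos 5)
Your plan has the right general flavor (a geometric capacity schedule, items activated at staggered times, the LP exploiting fractionality where the IP cannot), but the central mechanism you propose does not produce an unbounded gap, and the obstacles you list at the end are not incidental tuning issues: obstacle (a) cannot be resolved in the form you want, and even if it could, it would not help. If the $K$ premium items all have weight close to the current capacity $B_t$ (which is what makes a uniform fraction of $O(1/K)$ the natural LP allocation, and what makes them pairwise exclusive), then the capacity constraint $\sum_j w_j x_{j,t} \leq B_t$ forces $\sum_j x_{j,t}$ to be at most roughly $1$, so the LP's fractional premium value in period $t$ is at most about $\max_j v_j$ --- which the IP matches by integrally packing the single most valuable premium item. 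Mutual exclusivity at a fixed time therefore yields no gap; this is essentially the statement that the single-period knapsack LP has gap at most $2$ once every item individually fits. The unbounded gap has to come from the interaction of monotonicity with the growing capacity across time, and specifically from foreclosure of the \emph{past}: an item that exactly saturates $B_t$ at the first moment it fits can be packed integrally at that moment only if the knapsack was empty at all earlier times. Your sketch never isolates this effect, and your claimed IP upper bound (``the IP is essentially the maximum over the two strategies'') fails for the natural instantiation $B_t = 2^t$, since at time $t$ all premium items activated by time $t-1$ fit together with room to spare, so the IP is not forced to commit to one of them.

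The paper's construction supplies exactly the missing ingredients. Take $v_i = w_i = k^i$ for $i = 1, \ldots, \log_k T$ (so that LP value equals capacity consumed, and the LP earns $B_t$ in every period by filling the knapsack fractionally), and make $B_t$ a staircase that sits at level $k^i$ for a window of length $\Theta(T/k^i)$ --- geometrically \emph{shrinking} windows, so that each of the $\log_k T$ levels contributes the same amount $\Theta(T(k-1))$ to the LP, for a total of $\Theta(Tk\log_k T)$. During the window at level $k^i$, the IP either packs item $i$ itself, which saturates the knapsack exactly at its activation time and hence, by monotonicity, forces the knapsack to be empty at all earlier times (sacrificing value about $kT$), or makes do with items $1, \ldots, i-1$, whose total value $\sum_{j < i} k^j \approx k^i/(k-1)$ falls short of the capacity by a factor of about $k$. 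This pins the IP at $O(T\max(k, \log_k T))$ and yields a gap of $\min(k, \log_k T)$, which is unbounded as $k \to \infty$ with $T$ chosen accordingly. Note that in this construction the items are emphatically \emph{not} mutually exclusive at late times, there is no backbone/premium dichotomy, and the LP never spreads an $O(1/K)$ fraction over comparable-weight items --- three respects in which your plan would have to change before it could be completed.
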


\section{Constant Factor Algorithm} \label{constant alg}
In this section we provide a constant-factor approximation algorithm for $\IK$ when the capacity 
function $B_t$ is upper bounded by a polynomial function of $t$.  To motivate our approach we introduce two definitions.

\begin{definition}  \label{split} Let $S > 0$ and $0 < \kappa \le 1$.  We say that a $(S,\kappa)$-split takes place
at time $t_{\kappa}$ if the following conditions happen:
\begin{itemize}
\item [(i)] $\sum_{t = 1}^{t_{\kappa}-1} \Delta_t \le S \sum_{t = t_{\kappa}}^T \Delta_t$.
\item [(ii)] $B_{T} - B_{t_{\kappa}} \le \kappa \, B_{T}$.
\end{itemize}
\end{definition}
\noindent {\bf Remark 1.} Consider the time-invariant case, i.e. $\Delta_t = 1$ for all $t$.  When an $(S,\kappa)$-split 
takes place at $t_{\kappa}$, we have $t_{\kappa} < S T$, but the knapsack capacity 
at time $t_{\kappa}$ is already (at least) a  fraction $1 - \kappa$ of the final knapsack capacity, $B_T$.

\begin{definition}  \label{split2} Let $0 < \kappa \le 1$.  Define $S(\kappa)$ to be the smallest value $S \ge 0$, such
that there is a period $t_{\kappa}$ where an $(S, \kappa)$ split takes place. \end{definition}

\noindent {\bf Remark 2.} Note that $S(\kappa)$ decreases with $\kappa$.  Also, suppose $\Delta_t = 1$ for all $t$  and $B_t = \Theta(t^p)$ for some $p > 0$.  Then we have that $S(\kappa) \approx \frac{(1 - \kappa)^{1/p}}{1 - (1 - \kappa)^{1/p}}$.  Even though
this quantity converges to $ +\infty$ as $p \rightarrow + \infty$, for \textit{fixed} $p$ it is bounded.  This means that
(by Definition \ref{split}) during the last $\frac{T}{S(\kappa)}$ time periods the knapsack only gains a fraction $\le \kappa$ of the
final capacity $B_T$.\\

\noindent Given $\kappa > 0$, there is an algorithm with approximation factor $\min\{ \Omega(\frac{ 1 - \kappa }{\kappa}), \frac{1}{S(\kappa)} \}$. Thus, roughly speaking, the quality of the approximation improves if both $\kappa$ and $S(\kappa)$ are ``small''. Or,
to put it differently, if the capacity function $B_t$ is such that $S(\kappa)$ remains very large for $\kappa \approx 1$,
then the quality of the approximation bound will suffer.  As remark 2 shows, when $B_t$ is polynomially bounded
as a function of $t$, $S(\kappa)$ is bounded above, and so the ratio  $\frac{ 1 - \kappa }{\kappa S(\kappa)}$ remains bounded away from $0$, i.e. we indeed
obtain a constant-factor approximation algorithm.    The class of cases where we do not attain a constant-factor
approximation algorithms are those where $\lim_{\kappa \rightarrow 1}  S(\kappa) = + \infty$.  In such cases, the 
capacity function is attaining essentially all of its growth in an arbitrarily small final set of time periods.\\

In this paper,  for brevity,  we focus on the case $\kappa = 1/2$, and prove the following.
\begin{theorem} \label{halflemma} There is a polynomial-time algorithm for $\IK$ with approximation ratio $\min\left\{ \frac{1}{9}, \frac{1}{6 \max\{1, S(1/2)\}} \right\}$.\end{theorem}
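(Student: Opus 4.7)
My plan is to exploit the $(S(1/2), 1/2)$-split as a structural hinge, reducing $\IK$ to a polynomial collection of ordinary knapsack subproblems whose best solution, obtained via a disjunctive LP, yields the claimed ratio.  Let $t_{1/2}$ be the split time from Definition~\ref{split2}, set $\Sigma := \sum_{t=t_{1/2}}^{T}\Delta_t$, and for an unknown optimum $F^* = \{S_t^*\}$ put $A := S_{t_{1/2}}^*$, $B := S_T^* \setminus A$.  Nesting of $\{S_t^*\}$ gives $V(S_t^*) \le V(A)$ for $t < t_{1/2}$, so Definition~\ref{split}(i) together with the trivial inequality $\text{OPT}_2 := \sum_{t \ge t_{1/2}} V(S_t^*)\Delta_t \ge V(A)\,\Sigma$ yields $\text{OPT}_1 \le S(1/2)\,\text{OPT}_2$, hence $\text{OPT} \le (1 + S(1/2))\,\text{OPT}_2 \le (1 + S(1/2))\,(V(A) + V(B))\,\Sigma$, the last step from $S_t^* \subseteq A \cup B$ for $t \ge t_{1/2}$.

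Call an item \emph{heavy} if $w_i > B_{t_{1/2}}$.  Definition~\ref{split}(ii) gives $B_{t_{1/2}} \ge B_T/2$, so $S_T^*$ contains at most one heavy item $h^*$.  I would construct an $O(NT)$-sized disjunction indexed by (a) the identity of $h^*$ (possibly $\varnothing$) and (b) its entry period $\tau_{h^*}$; in each disjunct an LP packs a set of light items of total weight at most $B_{t_{1/2}}$, to be held from period $t_{1/2}$ onward, and inserts $h^*$ at period $\tau_{h^*}$.  In the disjunct matching $F^*$, the set $S_T^* \setminus \{h^*\}$ has total weight strictly less than $B_T/2 \le B_{t_{1/2}}$ and is a feasible LP solution of value $V(A) + V(B) - v_{h^*}$; standard LP rounding---take the integer part of the fractional solution, augmented with the ``best single item'' fallback and an auxiliary $O(N)$ enumeration of the two most valuable light items to push the rounding guarantee from $1/2$ to $2/3$---produces a feasible integer schedule of value at least $\tfrac{2}{3}\cdot\tfrac{1}{2}(V(A) + V(B))\,\Sigma = \tfrac{1}{3}(V(A) + V(B))\,\Sigma$.  (The residual case $V(A) = 0$ forces $|S_T^*| \le 1$ and is caught by the single-item branch.)

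Putting the estimates together, the algorithm's value satisfies $\text{ALG} \ge \tfrac{1}{3(1 + S(1/2))}\,\text{OPT}$.  An elementary case split---$3(1 + s) \le 9$ whenever $s \le 2$, and $3(1 + s) \le 6s$ whenever $s \ge 1$---verifies that this quantity is at least $\min\{1/9,\, 1/(6\max\{1,\,S(1/2)\})\}$ for every $s = S(1/2) \ge 0$, proving the theorem.

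The main obstacle I anticipate is the bookkeeping in the disjunctive step: upgrading the LP-rounding from the easy $1/2$-factor to the $2/3$-factor that the theorem's constant $1/9$ requires, while keeping the disjunction polynomially sized and the resulting schedule genuinely nested.  The $1/2$-guarantee alone only yields $1/(4(1+S(1/2)))$, which falls short in the narrow window $S(1/2) \in (5/4, 2)$ where the theorem demands $1/9$; the $2/3$ improvement is therefore essential, and its analysis---coupled with the treatment of the heavy item's entry time and the disjunctive framework of~\cite{Balas75}---is where the technical heart of the proof lies.
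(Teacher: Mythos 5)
Your strategy is genuinely different from the paper's: you anchor the analysis directly at the split time $t_{1/2}$ and reduce to ordinary knapsacks, whereas the paper splits the optimal \emph{value} at a period $t^{[3]}$ (Lemma~\ref{onethird}), guesses the most valuable packed item and its entry time, and rounds the disjunctive LPs $D(\bar t,\breve t,h)$ via Lemmas~\ref{onefrac} and~\ref{rounddown}. Your opening chain $\mathrm{OPT}_1 \le V(A)\sum_{t<t_{1/2}}\Delta_t \le S(1/2)\,\mathrm{OPT}_2$ and $\mathrm{OPT}_2\le V(S_T^*)\,\Sigma$ (with $\Sigma=\sum_{t\ge t_{1/2}}\Delta_t$) is correct, as is the closing arithmetic reducing $\tfrac{1}{3(1+S(1/2))}$ to the theorem's bound. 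The middle step, however, has a genuine gap: the claim that ``standard LP rounding'' plus a small enumeration recovers a $\tfrac23$ fraction \emph{of the LP value} is false. The knapsack LP has integrality gap approaching $2$ even when every item individually fits (capacity $C$, two items of weight $0.6C$ and value $1$ give LP value $\tfrac53$ against integer optimum $1$), so no rounding can certify $\tfrac23\cdot\tfrac12 V(S_T^*)$ starting from the fractional witness $\tfrac12\chi_{S_T^*}$. What actually rescues the constant is an \emph{integral} packing fact you never state: every non-heavy item weighs at most $B_{t_{1/2}}$ and $W(S_T^*\setminus\{h^*\})\le B_T\le 2B_{t_{1/2}}$, so a first-fit argument partitions $S_T^*\setminus\{h^*\}$ into at most three sets each feasible at time $t_{1/2}$, whence the \emph{integer} knapsack optimum at capacity $B_{t_{1/2}}$ is already at least $\tfrac13 V(S_T^*\setminus\{h^*\})$ and can be approached by an FPTAS; you must then check that the slack between $\tfrac{1}{3(1+S)}$ and $\min\{\tfrac19,\tfrac{1}{6\max\{1,S\}}\}$ (a factor of at least $6/5$, attained near $S=3/2$) absorbs the FPTAS loss.

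Two feasibility claims also fail as written. First, the witness ``$S_T^*\setminus\{h^*\}$ has weight less than $B_T/2\le B_{t_{1/2}}$'' requires the heavy item to exist; when $h^*=\varnothing$ the set $S_T^*$ may weigh up to $B_T\approx 2B_{t_{1/2}}$ and is \emph{not} feasible for your disjunct, and the parenthetical escape ``$V(A)=0$ forces $|S_T^*|\le 1$'' is wrong --- $A=\emptyset$ only says nothing is packed by time $t_{1/2}$, not that $S_T^*$ is a singleton. This no-heavy case is precisely where the three-bin argument above is indispensable. Second, the combined schedule that holds the light set from $t_{1/2}$ \emph{and} inserts $h^*$ at $\tau_{h^*}$ can violate capacity: the light set may weigh up to $B_{t_{1/2}}\ge B_T/2$ while $w_{h^*}>B_T/2$, and nothing guarantees that all of $S_T^*\setminus\{h^*\}$ coexists with $h^*$ at time $\tau_{h^*}$ in the optimal schedule. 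The repair is to keep the two as separate candidate solutions and take the better; with $X=(V(S_T^*)-v_{h^*})\Sigma$ and $Y=v_{h^*}\sum_{t\ge\tau_{h^*}}\Delta_t$ one has $\mathrm{OPT}_2\le X+Y$ and $\max\{X,Y\}\ge\tfrac12(X+Y)$, which is where your factor $\tfrac12$ legitimately comes from. With these repairs your argument does go through and in fact yields the ratio $\tfrac{1-\epsilon}{3(1+S(1/2))}$ using only a knapsack FPTAS at capacity $B_{t_{1/2}}$ plus $O(NT)$ single-item schedules --- a real simplification over the paper's route --- but the proof as submitted does not yet establish the theorem.
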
  
In the Appendix we outline how to
extend our approach to general $0 < \kappa < 1$. \\

\noindent {\bf Simplification.} For simplicity, we assume, by perturbing the $v_i$ if necessary that $v_i/w_i \neq v_j/w_j$ for all $i \neq j$.
The perturbation changes the value of $\IK$ by an arbitrarily small amount.
\subsection{The approximation algorithm}
Our algorithm is based on running (up to) two polynomially bounded procedures given below, for each time period $1 \le \bar t \le T$: a ``replicated  knapsack'' procedure, and an LP-rounding algorithm. Each run
will produce a feasible (integral) solution to $\IK$; the algorithm will select a solution that attains the highest objective
value.    We will use the same notation as above, i.e. the variable $x_{i,t}$ is used
to indicate whether item $i$ is in the knapsack at time $t$.

To introduce the first procedure we need a definition.
\begin{definition} \label{replicated} Let $1 \le \bar t \le T$.  A 0-1 vector $\bar x$ is a {\bf replicated-knapsack solution} at $\bar t$ for $\IK$ if the following conditions hold:
\begin{eqnarray}
&& \bar x_{i,t} = 0 \  \mbox{for all $i$ and $t < \bar t$}, \quad \mbox{and} \quad \bar x_{i,t} = \bar x_{i, \bar t} \ \mbox{for all $i$ and $t  \ge \bar t$.} \nonumber
\end{eqnarray}
\end{definition}
Recall that the capacity function $B_t$ is monotonely nondecreasing.  Thus the replicated-knapsack solution at $\bar t$ is
feasible for $\IK$ iff $\sum_i w_i x_{i, \bar t} \le B_{\bar t}$, i.e. the solution replicates a feasible solution to the (single-period) knapsack with weights $w_i$ and capacity $B_{\bar t}$.  Further, the objective value of $\bar x$ 
equals $ \left( \sum_{t = \bar t}^T \Delta_t \right) \sum_{i=1}^{N} v_i \bar x_{i, \bar t}$.
This quantity can be approximated (arbitrarily closely) in polynomial time using several well-known methods.

In order to describe the second procedure, consider formulation (\ref{vobj})-(\ref{bottpart}), for which the time period $1 < \bar t < T$,  
the item index $h$, and a second time period $\breve t \le \bar t$ are inputs. Here, $h$ is the most valuable item in the knapsack at time $\bar t$ and $h$ was first placed in the knapsack at time $\breve t$.
\begin{eqnarray}
\mbox{\bmath{D(\bar t, \breve t, h):}} \quad && \quad \max \sum_{t =1}^{T} \sum_{i=1}^{N} v_i \Delta_t x_{i,t} \label{vobj}\\
\text{s.t.} && \quad \sum_{i = 1}^{N} w_ix_{i,t} \leq B_t \quad \forall t \label{capacity} \\
&& x_{i, t-1} \leq x_{i, t} \quad \forall i, \ \text{and} \ t = 2, 3, \ldots, T \label{monotonoex} \\
&& x_{i, t} \in \{0,1\} \quad \forall i, t \label{01constr} \\  
&& x_{h, \breve t} \ = \ 1, \quad x_{h, \breve t - 1} \ = \ 0 \ \mbox{(if $\breve t > 0$)} \quad \mbox{and} \notag \label{enumhighesta}\\
&& x_{i, \bar t} \ = \ 0, \quad \mbox{$\forall$ $i$ with $v_i > v_h$}  \label{enumhighestb}\\
&& \frac{1}{3} \sum_{i, t} v_i \Delta_t x_{i,t} \ \le \ \sum_{t = \bar t}^T \sum_{i=1}^{N} v_i \Delta_t x_{i,t} \label{toppart}\\
&& \frac{1}{3} \sum_{i, t} v_i \Delta_t x_{i,t} \ \le \ \sum_{t = 1}^{\bar t - 1} \sum_{i=1}^{N} v_i \Delta_t x_{i,t}. \label{bottpart}
\end{eqnarray}
Other than constraints (\ref{enumhighesta})-(\ref{bottpart}) this is the same formulation as in Section 
\ref{LP relax}.  Constraints (\ref{toppart}) and (\ref{bottpart}) indicate that, in the time interval $[\bar t, T]$
(resp. $[1, \bar t -1]$) at least one-third of the total value is accrued.  Using (\ref{monotonoex}) and $\breve t \le \bar t$, constraint (\ref{enumhighesta}) implies that $x_{h, \bar t} = 1$.  Thus,
(\ref{enumhighestb}) states that
item $h$ is placed in the knapsack at time $\breve t$, and that 
at time $\bar t$, item $h$ is a highest-value item in the knapsack.     
We will now prove a number of results on this
formulation.  

\begin{definition} Let $D^R(\bar t, \breve t, h)$ denote the continuous relaxation of 
$D(\bar t, \breve t, h)$, i.e. that obtained by replacing (\ref{01constr}) with $0 \le x_{i,t} \le 1$ for all $i, t$.
Further, let $V^* = V^*(\bar t, \breve t, h)$ denote the value of $D^R(\bar t, \breve t, h)$.
\end{definition}

\begin{lemma}\label{onefrac}  Suppose $\tilde x \, = \, \tilde x(\bar t, \breve t, h) \, \in \R^{N \times T} $ be an optimal solution to the linear program where:
\begin{itemize}
\item [(i)] The objective function is (\ref{vobj}).
\item [(ii)] The constraints are (\ref{capacity})-(\ref{enumhighestb}), together with
\begin{eqnarray}
&& \frac{1}{3} V^* \ \le \ \sum_{t = \bar t}^T \sum_{i=1}^{N} v_i \Delta_t \tilde x_{i,t} \quad \mbox{and} \quad \frac{1}{3} V^* \ \le \ \sum_{t = 1}^{\bar t - 1} \sum_{i=1}^{N} v_i \Delta_t \tilde x_{i,t}. \label{top}
\end{eqnarray}
\end{itemize}
i.e. we replace constraints (\ref{toppart}) and (\ref{bottpart}) of $D^R(\bar t, \breve t, h)$ with (\ref{top}). Then for any time period $t$ there
is at most one item $i$ with $0 < \tilde x_{i,t} < 1$. \end{lemma}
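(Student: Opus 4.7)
The plan is a swap-and-contradict argument. Suppose for contradiction that some time $\hat t$ has two items $i_1 \neq i_2$ with $0 < \tilde x_{i_k, \hat t} < 1$ for $k = 1, 2$. By the simplification assumption, all ratios $v_i/w_i$ are distinct; after relabeling, take $v_{i_1}/w_{i_1} > v_{i_2}/w_{i_2}$, so $v_{i_1}w_{i_2} - v_{i_2}w_{i_1} > 0$. Let $U_k = [\underline{t}_k, \overline{t}_k]$ be the maximal interval containing $\hat t$ on which $\tilde x_{i_k, \cdot}$ is constant at the value $\tilde x_{i_k, \hat t}$; by monotonicity (\ref{monotonoex}) this is indeed an interval, and $\tilde x_{i_k, \cdot}$ strictly jumps at each endpoint interior to $[1, T]$.

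The core step is to produce an improving direction. Define $\tilde x^\epsilon$ to agree with $\tilde x$ everywhere except that, for $t$ in a chosen interval $A \ni \hat t$, set $\tilde x^\epsilon_{i_1, t} = \tilde x_{i_1, t} + \epsilon w_{i_2}$ and $\tilde x^\epsilon_{i_2, t} = \tilde x_{i_2, t} - \epsilon w_{i_1}$. For every $t \in A$ the capacity constraint (\ref{capacity}) changes by $\epsilon(w_{i_1}w_{i_2} - w_{i_2}w_{i_1}) = 0$, and capacities at $t \notin A$ are untouched. The natural choice is $A = [\underline{t}_2, \overline{t}_1]$, which contains $\hat t$. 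Inside $A$ the shift is uniform so monotonicity (\ref{monotonoex}) is preserved; at the left endpoint $(\underline{t}_2-1, \underline{t}_2)$ only $\tilde x_{i_2, \underline{t}_2}$ moves (downward), which is absorbed by the strict gap $\tilde x_{i_2, \underline{t}_2} - \tilde x_{i_2, \underline{t}_2-1} > 0$ for small $\epsilon$, and symmetrically at the right endpoint using the strict gap for $i_1$. The box bounds hold since $\tilde x_{i_1, t} \le \tilde x_{i_1, \hat t} < 1$ for $t \le \overline{t}_1$ and $\tilde x_{i_2, t} \ge \tilde x_{i_2, \hat t} > 0$ for $t \ge \underline{t}_2$.

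For the fixings (\ref{enumhighesta})--(\ref{enumhighestb}), one first checks $h \notin \{i_1, i_2\}$: otherwise the equalities $x_{h, \breve t - 1} = 0$ and $x_{h, \breve t} = 1$ together with monotonicity pin $\tilde x_{h, \cdot}$ to $\{0, 1\}$ everywhere, contradicting fractionality at $\hat t$. The constraint $x_{i, \bar t} = 0$ for $v_i > v_h$ is active at $i_k$ only when $\bar t < \underline{t}_k$; the subtle subcase is when $\bar t$ lies inside $A = [\underline{t}_2, \overline{t}_1]$, which arises only if $\bar t \ge \underline{t}_2$ and $v_{i_1} > v_h$. In that situation I would shrink $A$ to $[\bar t + 1, \overline{t}_1]$ and re-verify the left-boundary condition; I expect this bit of bookkeeping to be the main technical friction in the proof.

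Finally, the modified value constraints in (\ref{top}) are preserved: split over the two time windows, each side changes by $(v_{i_1}w_{i_2} - v_{i_2}w_{i_1})\epsilon \sum_{t \in A \cap \text{window}} \Delta_t \ge 0$, so both lower bounds $V^*/3$ remain satisfied. Meanwhile the overall objective (\ref{vobj}) changes by $(v_{i_1}w_{i_2} - v_{i_2}w_{i_1})\epsilon \sum_{t \in A} \Delta_t > 0$ for $\epsilon > 0$, contradicting optimality of $\tilde x$. Hence no time period admits two fractional items.
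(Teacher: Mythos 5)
Your argument is the same exchange argument the paper uses: at a period $\hat t$ carrying two fractional items, order them by value-to-weight ratio and shift mass from the worse to the better item at constant total weight over an interval containing $\hat t$, contradicting optimality. Your verification of (\ref{capacity}), of (\ref{monotonoex}) at the interval boundaries, of the box bounds, of $h \notin \{i_1,i_2\}$, and of the preservation of the two lower bounds in (\ref{top}) all match the paper's proof (and for (\ref{top}) you are more explicit than the paper). The paper works with the interval $[\min\{t : \tilde x_{i_2,t} > 0\}, \overline t_1]$ rather than your $[\underline t_2, \overline t_1]$; the difference is immaterial.

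The one step that does not go through is the subcase you flag yourself: $v_{i_1} > v_h$ together with $\underline t_2 \le \bar t$, where (\ref{enumhighestb}) and (\ref{monotonoex}) pin $\tilde x_{i_1,t} = 0$ for all $t \le \bar t$ while $\bar t$ sits inside your exchange interval. Your proposed repair --- shrink $A$ to $[\bar t + 1, \overline t_1]$ and ``re-verify the left-boundary condition'' --- fails precisely here: the hypothesis $\underline t_2 \le \bar t$ means $\tilde x_{i_2,\cdot}$ is constant on an interval containing both $\bar t$ and $\bar t + 1$, so decreasing $\tilde x_{i_2,\bar t+1}$ while $\tilde x_{i_2,\bar t}$ stays fixed violates (\ref{monotonoex}), and no choice of a smaller $\epsilon$ saves it. The natural alternative --- decrease $i_2$ all the way back to $\underline t_2$ but start the increase of $i_1$ only at $\bar t + 1$ --- keeps (\ref{capacity}) and (\ref{monotonoex}) but loses the weight cancellation on $[\underline t_2, \bar t]$, so the objective change is no longer guaranteed positive and the second inequality of (\ref{top}) may be violated; the contradiction evaporates. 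To be fair, the paper's own proof is silent on exactly this subcase (it checks only that $i, j \ne h$ and then asserts feasibility), so you have correctly isolated the genuinely delicate point; but as written your argument, like the paper's, does not dispose of it.
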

\noindent {\bf Remark.} It can be shown that $\tilde x$ can be obtained by solving a single, polynomial-size linear program, rather than the two-LP procedure implied by Lemma \ref{onefrac}.

\begin{lemma}\label{rounddown}  Let $\tilde x = \tilde x(\bar t, \breve t, h)$ be as in Lemma \ref{onefrac} and let $\breve x$ be obtained by
rounding down $\tilde x$.  Then $\breve x$ is feasible for $D(\bar t, \breve t, h)$, and
\begin{eqnarray}
&& \sum_{i,t} v_i \Delta t \, \breve x_{i,t} \ \ge \ \frac{1}{6} \min  \left\{1, \frac{\sum_{t = \bar t}^{T} \Delta_t}{ \sum_{t = 1}^{\bar t -1} \Delta_t} \right\}  \sum_{i,t} v_i \Delta t \, \tilde x_{i,t}.  \label{niceroundown}
\end{eqnarray}
\end{lemma}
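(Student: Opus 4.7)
The proof proceeds in two parts: feasibility and the value bound. Feasibility of $\breve x$ for the simpler constraints of $D(\bar t,\breve t,h)$ is immediate: (\ref{capacity}) holds since $\breve x \le \tilde x$ componentwise, monotonicity (\ref{monotonoex}) is preserved because $\lfloor\cdot\rfloor$ is monotone, (\ref{01constr}) is automatic, and the fixed-value conditions (\ref{enumhighesta}), (\ref{enumhighestb}) survive rounding since the forced values are integers. The balance constraints (\ref{toppart}), (\ref{bottpart}) will be handled at the end, falling out of the inequalities below.

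For the value bound, decompose $\sum v_i\Delta_t \tilde x_{i,t}=\tilde A+\tilde B$ into top- and bottom-part contributions ($t\ge\bar t$ and $t<\bar t$, respectively). Constraint (\ref{top}) gives $\tilde A,\tilde B\ge V^*/3$; a short convex-combination argument (if some $y$ feasible in the modified LP had value above $V^*$, then since $y$ can violate at most one of (\ref{toppart}), (\ref{bottpart}), an appropriate average with $x^*$ would land in $D^R$ with value still above $V^*$, a contradiction) forces $\tilde A+\tilde B=V^*$. By Lemma \ref{onefrac}, at most one component $\tilde x_{f(t),t}$ is fractional per period, so the rounding loss at $t$ is at most $v_{f(t)}\Delta_t$. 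Crucially, for $t\le\bar t-1$, combining constraint (\ref{enumhighestb}) with monotonicity forces every item $i$ with $\tilde x_{i,t}>0$ to satisfy $v_i\le v_h$ (any item of value above $v_h$ is zero at time $\bar t$, hence at every earlier time), so $v_{f(t)}\le v_h$ and the bottom-part loss is bounded by $v_h T_2$ with $T_2:=\sum_{t<\bar t}\Delta_t$. Thus $\hat B\ge\tilde B-v_h T_2$. Similarly, since $\tilde x_{h,\breve t}=1$ is integral and $\breve t\le\bar t$, monotonicity forces $\breve x_{h,t}=1$ for every $t\ge\bar t$, giving $\hat A\ge v_h T_1$ with $T_1:=\sum_{t\ge\bar t}\Delta_t$.

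Setting $r=T_1/T_2$, I would conclude by case analysis. If $r\ge 1$, then $v_h(T_1-T_2)\ge 0$ and $\hat A+\hat B\ge \tilde B\ge V^*/3$. If $r<1$, split on whether $v_h T_2\le\tilde B$: in the first sub-case $\hat A+\hat B\ge\tilde B-(1-r)v_h T_2\ge r\tilde B\ge rV^*/3$; in the second, item $h$'s contribution alone dominates, $\hat A+\hat B\ge v_h T_1=rv_h T_2>r\tilde B\ge rV^*/3$. In all cases the rounded value is at least $\frac{1}{3}\min\{1,r\}(\tilde A+\tilde B)$, comfortably exceeding the claimed $\frac{1}{6}\min\{1,r\}$ bound and so leaving slack. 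The step I expect to be the main obstacle is verifying the balance constraints (\ref{toppart}), (\ref{bottpart}) for $\breve x$ itself: the lower bound $\hat A\ge v_h T_1$ can be weak when $v_h$ is small, so showing $\hat B\le 2\hat A$ and $\hat A\le 2\hat B$ appears to require a finer accounting of top-part rounding losses (again leveraging (\ref{enumhighestb}) to bound the values of items present at time $\bar t$), or else the observation that when those inequalities fail, the replicated-knapsack procedure already dominates.
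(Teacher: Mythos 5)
Your value-bound argument is sound and is essentially the paper's proof in different clothing: both rest on the same three facts, namely that Lemma \ref{onefrac} leaves at most one fractional variable per period, that any fractional item at a period $t<\bar t$ has value at most $v_h$ by (\ref{enumhighestb}) and monotonicity, and that the rounded solution retains item $h$ throughout $[\bar t,T]$ so that its top part is at least $v_hT_1$ with $T_1=\sum_{t\ge\bar t}\Delta_t$. The paper organizes the cases around whether the fractional mass in the bottom part is below $V^*/6$ (if so, the surviving integral bottom part is already at least $V^*/3-V^*/6$; if not, then $v_hT_2\ge V^*/6$ and the retained top part is at least $(T_1/T_2)V^*/6$), while you split on $r=T_1/T_2$ versus $1$ and on $v_hT_2$ versus $\tilde B$; your bookkeeping is slightly sharper and yields the factor $\frac13\min\{1,r\}$ in place of $\frac16\min\{1,r\}$.

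Two caveats. First, the side claim $\tilde A+\tilde B=V^*$ is not established by your convex-combination argument: the slack of the violated balance constraint along the segment from $x^*$ to $y$ is affine in the mixing parameter, so the argument fails precisely when that constraint is tight at $x^*$, and the relaxed LP of Lemma \ref{onefrac} can in principle have value strictly above $V^*$. But you do not need the claim: every branch of your case analysis already gives a lower bound of $\frac13\min\{1,r\}\,V^*$, and a bound against $V^*$ is all that is used downstream --- in Lemma \ref{hardcase} one compares against $\sum_{i,t}v_i\Delta_t x^Z_{it}\le V^*$, since the optimal integral solution is feasible for $D(\bar t,\breve t,h)$. (The paper's own proof of (\ref{niceroundown}) likewise only establishes the bound with $V^*$ on the right-hand side.) Second, the step you flag as the main obstacle --- verifying (\ref{toppart}) and (\ref{bottpart}) for $\breve x$ --- is a non-issue: those constraints can indeed fail after rounding, but nothing in the algorithm requires them to hold for the rounded point; all that is needed is that $\breve x$ satisfies (\ref{capacity})--(\ref{01constr}), i.e.\ is feasible for $\IK$, which your first paragraph already establishes. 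Do not attempt the finer accounting of top-part losses you propose; it is unnecessary.
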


\hspace{.1in} 

Lemma \ref{rounddown}, together with the replicated-knapsack construction given above, constitute the two cases that
our algorithm will enumerate for each $t$.
We will next show that the best solution of the replicated-knapsacks and 
the roundown constructions attain a factor $\frac{1}{6 S}$ where $S = \max\{1, S(1/2) \}$ (recall Definition \ref{split}).

\subsection{Existence of approximation}

In the remainder of this section, we assume that $x^Z$ is an {\bf optimal} solution to a given instance of
$\IK$.  We $V^Z$ denote the value of the instance, i.e. $ V^Z \doteq \sum_i v_i \Delta_t x^Z_{it}$.

\begin{lemma} \label{onethird} Either there is a period $1 < t^{[3]} < T$ such that 
\begin{itemize}
\item [(a)] $\frac{1}{3}V^Z \ \le \ \sum_{t = t^{[3]}}^T \sum_i v_i \Delta_t x^Z_{it}$
\item [(b)] $\frac{1}{3}V^Z \ \le \ \sum_{t = 1}^{t^{[3]} - 1} \sum_i v_i \Delta_t x^Z_{it}$
\end{itemize}
or there is a replicated-knapsack solution with value at least $ \frac{V^Z}{3}$.
\end{lemma}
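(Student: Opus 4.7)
The plan is to analyze the running cumulative value function $C(t) = \sum_{s=1}^{t}\sum_i v_i \Delta_s x^Z_{i,s}$, which is nondecreasing with $C(0)=0$ and $C(T)=V^Z$, and to split into two cases based on whether $C$ passes through the ``middle third'' interval $[V^Z/3,\, 2V^Z/3]$ at an interior time.

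In the first case, I would assume there exists some $t \in \{1,\ldots,T-2\}$ with $V^Z/3 \le C(t) \le 2V^Z/3$ and set $t^{[3]} = t+1$. Then (b) holds immediately because $C(t^{[3]}-1) = C(t) \ge V^Z/3$, while (a) holds because the complementary tail sum equals $V^Z - C(t) \ge V^Z/3$; the range constraint $1 < t^{[3]} < T$ is built into the choice of $t$.

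In the complementary case, either $C$ jumps strictly over the interval $[V^Z/3, 2V^Z/3]$ between two consecutive periods, or the only index with $C(t)$ in that interval is $t = T-1$. In either situation some single period $t^{*}$ contributes at least $V^Z/3$ of the total value by itself: take $t^{*} = T$ in the boundary case, and otherwise take $t^{*}$ to be the smallest index with $C(t^{*}) > 2V^Z/3$, so that $C(t^{*}-1) < V^Z/3$ forces the gap $C(t^{*}) - C(t^{*}-1) > V^Z/3$. I would then build the replicated-knapsack solution $\bar x$ at $\bar t = t^{*}$ by copying $x^Z_{i,t^{*}}$ into every period $t \ge t^{*}$ and setting $\bar x_{i,t} = 0$ for $t < t^{*}$. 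Feasibility follows from $\sum_i w_i x^Z_{i,t^{*}} \le B_{t^{*}} \le B_t$ for every $t \ge t^{*}$, using monotonicity of $B_t$, and its objective value is
\[
\Bigl(\sum_{t=t^{*}}^{T}\Delta_t\Bigr)\sum_i v_i x^Z_{i,t^{*}} \ \ge \ \Delta_{t^{*}}\sum_i v_i x^Z_{i,t^{*}} \ = \ C(t^{*}) - C(t^{*}-1) \ \ge \ \frac{V^Z}{3}.
\]

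The only genuine subtlety is the boundary case where the middle third of cumulative value is straddled exactly at $t = T-1$, which would force $t^{[3]} = T$ and so fall outside the allowed range $1 < t^{[3]} < T$; this is precisely why the dichotomy routes such instances into the replicated-knapsack branch (at $\bar t = T$). Apart from this edge case, the argument is a clean monotone-crossing argument driven by the pigeonhole behaviour of $C$.
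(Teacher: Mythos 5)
Your proof is correct and is in essence the same argument as the paper's: the paper takes $t^{[3]}$ to be the largest period for which (a) holds and observes that if no interior choice also satisfies (b), then a single period carries at least $V^Z/3$ of the value and replicating that period's knapsack suffices, which is exactly your monotone-crossing dichotomy phrased via the cumulative function $C(t)$. Your write-up is, if anything, a bit more careful about the boundary cases ($t^{[3]}=T$ and the jump over the middle third) than the paper's own sketch.
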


Using Lemma \ref{onethird} we can assume that none of the $T$ replicated-knapsack solutions (which we can approximate,
in polynomial time, to any desirable constant factor) is within a factor of $1/3$ of $V^Z$, and thus, that a time
period $1 < t^{[3]} < T$ satisfying (a) and (b) does exist.  Using this fact, we will next
consider the formulations 
$D(\bar t, \breve t, h)$ discussed in the previous section, and prove that the bound obtained in Lemma \ref{rounddown} will
yield a large enough constant factor for at least one choice of $\bar t$ and $h$.

Recall Definition \ref{split}. Let $S = S(1/2)$ and $t_{1/2}$ be a time period so that
\begin{eqnarray}
&& \sum_{t = 1}^{t_{1/2}-1} \Delta_t \le S \sum_{t = t_{1/2}}^T \Delta_t \quad \mbox{and} \quad B_{T} - B_{t_{1/2}} \le \frac{B_{T}}{2}.
\label{Sagain}
\end{eqnarray}
\begin{lemma} \label{hardcase} Suppose that $t^{[3]} \le t_{1/2}$.  Let $h^*$ be the most valuable item in
the knapsack, under solution 
$x^Z$, at time $t^{[3]}$, and let $\breve t \le t^{[3]}$ be the time it was placed in the knapsack.  Then
rounding down the solution to $D^R(t^{[3]}, \breve t, h^*)$ yields a feasible solution to problem $\IK$, of value at least
$$\frac{1}{6 \max\{1, S\}} \sum_{i,t} v_i \Delta t \, x^Z_{i,t}. $$
\end{lemma}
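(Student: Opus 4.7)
The plan is to exhibit $x^Z$ itself as a feasible integer solution to $D(t^{[3]}, \breve t, h^*)$, conclude that the LP optimum $V^* = V^*(t^{[3]}, \breve t, h^*)$ of $D^R(t^{[3]}, \breve t, h^*)$ satisfies $V^* \ge V^Z$, and then invoke Lemma \ref{rounddown} to lower-bound the value of the rounded-down solution. A short calculation using the $(S,1/2)$-split property (\ref{Sagain}) will then replace the $\min$ appearing in (\ref{niceroundown}) by $1/\max\{1,S\}$.

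First I would verify that $x^Z$ satisfies every constraint of $D(t^{[3]}, \breve t, h^*)$. The capacity, monotonicity, and integrality constraints (\ref{capacity})--(\ref{01constr}) hold because $x^Z$ is feasible for $\IK$. By the choice of $\breve t$ as the first period in which $h^*$ appears in the knapsack under $x^Z$, we have $x^Z_{h^*, \breve t} = 1$ and (if $\breve t > 0$) $x^Z_{h^*, \breve t - 1} = 0$. Since $h^*$ is a highest-value item present in the knapsack at time $t^{[3]}$ under $x^Z$ and items can never be removed once placed, every $i$ with $v_i > v_{h^*}$ satisfies $x^Z_{i, t^{[3]}} = 0$, giving (\ref{enumhighestb}). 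Finally, constraints (\ref{toppart}) and (\ref{bottpart}) with $\bar t = t^{[3]}$ are exactly parts (a) and (b) of Lemma \ref{onethird} (we are in that branch of the dichotomy, since by assumption the replicated-knapsack solutions fall short of $V^Z/3$). Hence $x^Z$ is feasible for $D(t^{[3]}, \breve t, h^*)$, and since $D^R$ is a relaxation we conclude $V^* \ge V^Z$.

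Applying Lemma \ref{rounddown} with $\bar t = t^{[3]}$ then yields a feasible solution $\breve x$ to $\IK$ whose value is at least
\[
\frac{1}{6}\,\min\!\left\{1,\, \frac{\sum_{t=t^{[3]}}^T \Delta_t}{\sum_{t=1}^{t^{[3]}-1}\Delta_t}\right\}\, V^* \;\ge\; \frac{1}{6}\,\min\!\left\{1,\, \frac{\sum_{t=t^{[3]}}^T \Delta_t}{\sum_{t=1}^{t^{[3]}-1}\Delta_t}\right\}\, V^Z.
\]
To finish, I would use $t^{[3]} \le t_{1/2}$, which gives $\sum_{t=1}^{t^{[3]}-1}\Delta_t \le \sum_{t=1}^{t_{1/2}-1}\Delta_t$ and $\sum_{t=t^{[3]}}^T \Delta_t \ge \sum_{t=t_{1/2}}^T \Delta_t$; combined with (\ref{Sagain}) this yields $\sum_{t=1}^{t^{[3]}-1}\Delta_t \le S \sum_{t=t^{[3]}}^T \Delta_t$, so the displayed $\min$ is at least $\min\{1,1/S\} = 1/\max\{1,S\}$, completing the claimed bound. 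The only slightly delicate step is verifying (\ref{enumhighestb}) and the one-third inequalities (\ref{toppart})--(\ref{bottpart}) for $x^Z$; everything else is routine manipulation of $\Delta_t$-sums together with the previously stated lemmas.
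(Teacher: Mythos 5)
Your proposal is correct and follows essentially the same route as the paper: the paper's own proof simply invokes Lemma \ref{rounddown} on $D(t^{[3]}, \breve t, h^*)$ and then uses $t^{[3]} \le t_{1/2}$ together with (\ref{Sagain}) to bound the $\min$ by $1/\max\{1,S\}$. You have merely made explicit the steps the paper leaves implicit --- checking that $x^Z$ is feasible for $D(t^{[3]}, \breve t, h^*)$ so that $V^* \ge V^Z$, and the $\Delta_t$-sum comparison --- all of which are verified correctly.
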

\noindent {\em Proof.} By Lemma \ref{rounddown} applied to formulation $D(t^{[3]}, \breve t, h^*)$, 
we will obtain a feasible
solution to $\IK$ with value at least
\begin{eqnarray}
&& \frac{1}{6} \min \left\{1, \frac{\sum_{t = t^{[3]}}^{T} \Delta_t}{ \sum_{t = 1}^{t^{[3]} -1} \Delta_t} \right\} \sum_{i,t} v_i \Delta t \, x^Z_{i,t}. \label{finally}
\end{eqnarray}
But we are assuming that $t^{[3]} \le t_{1/2}$.  This implies the desired bound. \qed

\hspace{.1in} 

We can now assume that $t_{1/2} < t^{[3]}$.  We can show that in this case a replicated-knapsack
solution has value at least $V^*/9$. 

\begin{lemma} \label{easycase} Suppose $t_{1/2} < t^{[3]}$.  Then a replicated-knapsack
solution has value at least $V^*/9$.
\end{lemma}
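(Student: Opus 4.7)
The plan is the following. Let $S^Z_t = \{i : x^Z_{i,t} = 1\}$ and set $M = V(S^Z_T)$. Since $B_{t_{1/2}} \geq B_T/2$ (Definition \ref{split}(ii) with $\kappa = 1/2$) and $W(S^Z_T) \leq B_T$, at most one item of $S^Z_T$ can have weight strictly greater than $B_{t_{1/2}}$ -- otherwise two such items already overfill a capacity-$B_T$ knapsack. Call this distinguished item $h$ if it exists. Then $S^Z_T \setminus \{h\}$ has total weight at most $B_T - w_h < B_T/2 \leq B_{t_{1/2}}$, so it is a feasible single-period knapsack solution at time $t_{1/2}$; $S^Z_T$ itself is such a solution when no $h$ exists. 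These sets give a short list of replicated-knapsack candidates whose values I will compare to $V^Z$ via a case analysis.

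The easy cases are when either $h$ does not exist, or $h$ exists with $v_h \leq 2M/3$. In both, the relevant set has value at least $M/3$ and fits at time $t_{1/2}$, so replicating it from $t_{1/2}$ yields value at least $(M/3) \sum_{t \geq t_{1/2}} \Delta_t \geq (M/3) \sum_{t \geq t^{[3]}} \Delta_t$, where we used $t_{1/2} < t^{[3]}$. Since $V(S^Z_t) \leq M$, Lemma \ref{onethird}(a) gives $M \sum_{t \geq t^{[3]}} \Delta_t \geq V^Z/3$, so the candidate has value at least $V^Z/9$. Now suppose $v_h > 2M/3$, and let $\bar t_0 = \min\{t : B_t \geq w_h\}$, so $t_{1/2} < \bar t_0 \leq T$ (the upper bound uses $h \in S^Z_T$). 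If $\bar t_0 \leq t^{[3]}$, replicating $\{h\}$ from $\bar t_0$ produces value $v_h \sum_{t \geq \bar t_0} \Delta_t \geq v_h \sum_{t \geq t^{[3]}} \Delta_t > (2M/3) \cdot V^Z/(3M) = 2V^Z/9$, again via Lemma \ref{onethird}(a) and $V(S^Z_t) \leq M$.

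The only remaining case is $v_h > 2M/3$ together with $\bar t_0 > t^{[3]}$, and it is the main obstacle: neither the ``complement'' candidate nor the single-item candidate is guaranteed to be large on its own, and the trick is to pair them. Because $h$ cannot appear in the knapsack before $\bar t_0$, we have $V(S^Z_t) \leq M - v_h$ throughout $[t^{[3]}, \bar t_0 - 1]$, while $V(S^Z_t) \leq M$ for $t \geq \bar t_0$. Lemma \ref{onethird}(a) then gives
\begin{equation*}
V^Z/3 \ \leq\ (M - v_h) \sum_{t = t^{[3]}}^{\bar t_0 - 1} \Delta_t \ + \ M \sum_{t \geq \bar t_0} \Delta_t.
\end{equation*}
Now consider candidate (A), replicate $S^Z_T \setminus \{h\}$ from $t_{1/2}$ with value $(M - v_h) \sum_{t \geq t_{1/2}} \Delta_t$, and candidate (B), replicate $\{h\}$ from $\bar t_0$ with value $v_h \sum_{t \geq \bar t_0} \Delta_t$. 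A direct rearrangement gives $(A) + (B) = (M - v_h) \sum_{t_{1/2} \leq t < \bar t_0} \Delta_t + M \sum_{t \geq \bar t_0} \Delta_t$, and since $[t^{[3]}, \bar t_0 - 1] \subseteq [t_{1/2}, \bar t_0 - 1]$ this dominates the right-hand side of the displayed inequality. Therefore $(A) + (B) \geq V^Z/3$, so the larger of the two candidates -- both of which the algorithm considers -- has value at least $V^Z/6 \geq V^Z/9$, completing the plan.
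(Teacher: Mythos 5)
Your argument takes a genuinely different route from the paper's (which normalizes $x^Z$ to be ``pushed to the left,'' splits the post-$t^{[3]}$ value into three pieces according to the first two addition times $t_1, t_2$ after $t^{[3]}$, and shows each piece can be realized as a replicated knapsack), and most of your case analysis is sound: the pairing of the two candidates $(A)$ and $(B)$ in your final case is a clean idea that even yields $V^Z/6$ there. However, there is one step that fails as written. In your first ``easy case,'' when no item of $S^Z_T$ has weight exceeding $B_{t_{1/2}}$, you assert that $S^Z_T$ itself is feasible at time $t_{1/2}$. That does not follow: the hypothesis only bounds individual item weights, not the total. Since $W(S^Z_T)$ can be as large as $B_T$ while $B_{t_{1/2}}$ can be as small as $B_T/2$, the set $S^Z_T$ can have total weight up to twice the capacity available at $t_{1/2}$ even though every single item fits (e.g., many items each of weight just under $B_{t_{1/2}}$, or many tiny items summing to $B_T$). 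So the candidate you rely on in that subcase need not be a replicated-knapsack solution at $t_{1/2}$, and the $V^Z/9$ bound is not established there.

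The gap is repairable without changing your framework. In the no-$h$ subcase, order the items of $S^Z_T$ arbitrarily and let $j$ be the largest index with $w_1 + \cdots + w_j \le B_{t_{1/2}}$; then the three sets $\{1,\dots,j\}$, $\{j+1\}$, and $\{j+2,\dots\}$ each have weight at most $B_{t_{1/2}}$ (the last because the total is at most $B_T \le 2B_{t_{1/2}}$ and the first two prefixes already exceed $B_{t_{1/2}}$; the singleton because no individual item exceeds $B_{t_{1/2}}$ by assumption). One of the three has value at least $M/3$, which restores exactly the bound your argument needs and leaves the rest of the proof intact. Note that this feasibility issue is precisely what the paper's proof is engineered to avoid: its case (c) set $A(2)$ is shown to fit at $t_{1/2}$ via the chain $\sum_{i \in A(2)} w_i \le B_T - B_{t^{[3]}} \le B_T - B_{t_{1/2}} \le B_{t_{1/2}}$, so the ``does the candidate set actually fit at $t_{1/2}$?'' question is the crux of the lemma, not a formality.
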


\section{A PTAS for $\IIK$ when $T = O(\sqrt{\log N})$} \label{PTAS} 
Now we are ready to present the PTAS for the time-invariant incremental knapsack problem when $T = O(\sqrt{\log N})$. This algorithm is easily extended to the case of $\IK$ with fixed $T$ and monotonically nonincreasing $\Delta_t$ quantities.  Consider an instance of $\IIK$ and let $ \epsilon \in (0,1)$. Without loss of generality, we can assume that the $v_i$'s are integral. Fixing an optimal solution $OPT$, and let $h$ be a the maximum valued item that is ever placed in the knapsack by $OPT$. Then it suffices to optimize over the set of items $S^{h} = \{ i \in S | v_i \leq v_h\}$.  We partition $S^{h}$ into $K+1$ subsets $X = \{S^{1,h}, S^{2,h}, \ldots, S^{K,h}, T^{h}\}$, where 
\[S^{k,h} = \{j \in S, j \neq h: \ (1-\epsilon)^{k-1}v_h \geq  v_j > (1-\epsilon)^{k}v_h\} \quad \ \text{for} \ k = 1, \ldots, K,\]
and
\[T^{h} = \{j \in S : (1- \epsilon)^Kv_h \geq v_j\}.\]
In order to attain the approximation ratio, we will choose $K$ large enough so that $(1-\epsilon)^{K} < \epsilon /T$ or equivalently, $K > \frac{\log (T/\epsilon)}{\epsilon}$. 

Consider a modified instance of the problem where items have identical weights as the original instance and item $i$ has a modified value of $v'_i = (1-\epsilon)^{k-1}V$ if $i \in S^k$ and $v'_i = v_i$ otherwise. Let $OPT_m$ denote an optimal solution to the modified instance of the problem. Let $V(SOL)$ and $V_m(SOL)$ be the objective value with respect to a solution $SOL$ of the original instance and the modified instance respectively. As we did not change the item weights, $OPT_{m}$ is a feasible solution to the original instance. Moreover, 
\[V(OPT_{m}) \geq (1-\epsilon)V_m(OPT_m) \geq (1-\epsilon)V_m(OPT) \geq (1-\epsilon)V(OPT),\] 
where the first inequality follows from the fact that $v_i \geq (1-\epsilon)v'_i$ for every item $i$, the second inequality follows from the fact that $OPT_m$ is an optimal solution to the modified instance, and the third inequality follows from the fact that $v_i \leq v'_i$ for every item $i$. 

Now, since all items within each $S^{k,h}$ have equal value in the modified instance, it is clear that conditioning on the number of items chosen by OPT within each $S^{k,h}$, OPT would tend to choose the items within the same value class in the order of non-decreasing weight (breaking ties arbitrarily). Thus, it suffices to enumerate feasible solutions that can be described by a collection of $K$ $T$-vectors $\{\sigma^{1}, \ldots, \sigma^{K}\}$, where $\sigma_{t}^{k} \in \{0, 1, \ldots, |S^{k, h}|\}$ denotes the number of items chosen from $S^{k,h}$ in time period $t$, in order to find an optimal solution. Nonetheless, the number of potential solutions that we have to enumerate would be exponential in $N$ if we attempt to enumerate  all possible configurations of $\{\sigma^{1}, \ldots, \sigma^{K}\}$. Consequently, we will only explicitly enumerate $\sigma_{t}^{k}$ taking values from $\{0,1,\ldots, \ \min(\lceil 1/\epsilon \rceil, |S^{k,h}|)\}$. For $\sigma_{t}^{k}$ taking values larger than $J = \lceil 1/\epsilon \rceil$, we will instead let the feasible region of an LP capture these feasible points and try to let the LP choose an optimal solution for us and subsequently round this optimal solution. Lastly, since we don't know the most valuable item $h$ taken by $OPT$ in the original instance of the problem, we will have to guess such an item by enumeration.

Our disjunctive procedure is as follows. First, we guess the most valuable item $h \in S$ packed by an optimal solution. Subsequently, we only consider choosing items from $S^{h}$ and round the values of the items in $S^{h}$ to obtain the modified instance of the problem. We will then focus on solving the modified instance of the problem.  Let $k_i$, $i = 1, 2, \ldots, |S^{k,h}|$, be the $i$-th lightest weight item in $S^{k,h}$ (break ties arbitrarily). Let $x_{k_i,t}$ be the variable indicating whether item $k_i$ is placed in the knapsack in time period $t$.  Let $\sigma = \{\sigma^{1}, \ldots, \sigma^{K}\} \in \{0,\ldots, J\}^{TK}$ and define the following polyhedron:

\[
\begin{split}
Q^{\sigma, h}= \{&x \in [0,1]^{T|S^{h}|}: x_{h,T} = 1 \\
& x_{k_1,t} = x_{k_2,t} = \ldots = x_{k_{|S^{k,h}|},t} = 0 \ \quad \forall (k,t) \ \text{s.t.} \ \sigma^{k}_{t} = 0 \\
& x_{k_1,t} = x_{k_2,t} = \ldots = x_{k_{\sigma^{k}_{t}},t} = 1, \ x_{k_{\sigma^{k}_{t} + 1},t} = \ldots = x_{k_{|S^{k,h}|},t} = 0\  \\ &\quad \forall (k,t) \ \text{s.t.} \ 1 \leq \sigma^{k}_{t} < J \ \text{and} \ \sigma^{k}_t < |S^{k,h}| \\
& x_{k_1,t} = x_{k_2,t} = \ldots = x_{k_{\sigma^{k}_{t}},t} = 1, \ \sum_{i=1}^{|S^{k,h}|} x_{k_i, t} \geq \sigma^{k}_{t} \\ &\quad \quad \quad \quad \quad \quad \quad \quad \forall (k,t) \ \text{s.t.} \ \sigma^{k}_{t} = J \ \text{and} \ \sigma^{k}_t < |S^{k,h}| \\
& x_{k_1,t} = x_{k_2,t} = \ldots =  x_{k_{|S^{k,h}|},t} = 1\  \quad \forall (k,t) \ \text{s.t.} \ \sigma^{k}_t \geq |S^{k,h}| \\
& \sum_{k=1}^{K}\sum_{i = 1}^{|S^{k,h}|} w_{k_i}x_{k_i,t} + \sum_{i \in T^{h}} w_i x_{i,t} \leq B_t \quad \forall t \\
& x_{k_i, t-1} \leq x_{k_i, t} \quad \forall (k, i), \text{and} \ t = 2, 3, \ldots, T \\
& x_{i, t-1} \leq x_{i, t} \quad \forall i \in T^{h}, \text{and} \ t = 2, 3, \ldots, T \}. \\ 
\end{split}
\]
\begin{lemma} \label{NumDisjunctions}
For each fixed $h$, there are $O((1/\epsilon + T)^{O(\log(T/\epsilon )/\epsilon^2)})$ polyhedra $Q^{\sigma,h}$ in our disjunctive procedure.
\end{lemma}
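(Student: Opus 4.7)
The plan is to directly upper-bound the number of $\sigma=(\sigma^1,\ldots,\sigma^K)$ that index a nonempty polyhedron $Q^{\sigma,h}$, since items in $T^h$ are handled uniformly by continuous variables inside the LP and contribute no extra combinatorial choice. Each $\sigma^k$ is a $T$-vector with entries in $\{0,1,\ldots,J\}$, where $J=\lceil 1/\epsilon\rceil$, and $K=O(\log(T/\epsilon)/\epsilon)$ by the condition $(1-\epsilon)^K<\epsilon/T$. A naive count of all such $\sigma$ gives only $(J+1)^{TK}$, whose exponent contains $T$ and is too weak; the needed sharpening must come from exploiting the incremental structure of $\IIK$.

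The key step is to observe that for any $\sigma$ with $Q^{\sigma,h}\neq\emptyset$, each vector $\sigma^k$ is componentwise nondecreasing in $t$. This follows by inspecting the defining constraints of $Q^{\sigma,h}$: the equalities that fix $x_{k_1,t}=\cdots=x_{k_{\sigma^k_t},t}=1$, together with the equalities $x_{k_{\sigma^k_t+1},t}=\cdots=x_{k_{|S^{k,h}|},t}=0$ in the regimes where those zeros are enforced, and the monotonicity constraints $x_{k_i,t-1}\le x_{k_i,t}$, force $\sigma^k_{t-1}\le\sigma^k_t$ by a short case split over the four regimes for $\sigma^k_t$ appearing in the definition of $Q^{\sigma,h}$ (namely $\sigma^k_t=0$, $1\le\sigma^k_t<J$, $\sigma^k_t=J$ with $\sigma^k_t<|S^{k,h}|$, and $\sigma^k_t\ge|S^{k,h}|$). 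The ordering $k_1,\ldots,k_{|S^{k,h}|}$ of items by weight is fixed and independent of $t$, so the argument is uniform.

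Once restricted to nondecreasing $\sigma^k$, a standard stars-and-bars (or lattice-path) count gives $\binom{T+J}{J}\le(T+J+1)^J$ choices of $\sigma^k$ per class. Taking the product over $k=1,\ldots,K$ yields at most $(T+J+1)^{JK}$ polyhedra, and substituting $J=O(1/\epsilon)$ and $K=O(\log(T/\epsilon)/\epsilon)$ gives $JK=O(\log(T/\epsilon)/\epsilon^2)$, so the final bound is $(T+1/\epsilon)^{O(\log(T/\epsilon)/\epsilon^2)}$, as claimed. The main obstacle is the monotonicity reduction: the case analysis has to be carried out for each of the four regimes of $\sigma^k_t$, since the polyhedron uses a different constraint pattern in each; once that is pinned down, the counting and the arithmetic on the exponents are routine.
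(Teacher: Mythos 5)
Your proof is correct and follows essentially the same route as the paper's: both restrict attention to componentwise nondecreasing vectors $\sigma^k$, apply a stars-and-bars count of $O((J+T)^J)$ choices per value class, and take the product over the $K$ classes to obtain the stated bound. The only difference is that you explicitly justify the monotonicity of $\sigma^k$ from the constraints defining $Q^{\sigma,h}$, a step the paper simply asserts.
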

Since we have to enumerate our guess for the most valuable item $h$, we get the following corollary.
\begin{corollary}
There are a total of $O(N(1/\epsilon + T)^{O(\log(T/\epsilon )/\epsilon^2)})$ LPs in our disjunctive procedure.
\end{corollary}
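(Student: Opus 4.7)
The plan is to reduce to monotone configurations $\sigma$ and then apply a standard stars-and-bars count.

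First I would argue that only monotone configurations --- those with $\sigma^k_1 \le \sigma^k_2 \le \cdots \le \sigma^k_T$ for every $k$ --- yield non-empty polyhedra $Q^{\sigma,h}$, so the others may be discarded without changing the enumeration. Suppose $\sigma^k_{t-1} > \sigma^k_t$ for some $k$ and $t$. Every branch of the definition of $Q^{\sigma,h}$ at time $t-1$ pins $x_{k_i,t-1} = 1$ for all $i \le \sigma^k_{t-1}$. On the other hand, since $\sigma^k_t < \sigma^k_{t-1} \le \min(J, |S^{k,h}|)$, at time $t$ we fall strictly inside the first or second branch of the definition, which pins $x_{k_i, t} = 0$ for every $i > \sigma^k_t$. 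Taking $i = \sigma^k_{t-1}$ gives $x_{k_i, t-1} = 1 > 0 = x_{k_i, t}$, which contradicts the monotonicity constraint $x_{k_i, t-1} \le x_{k_i, t}$. Hence $Q^{\sigma,h}$ is empty for any non-monotone $\sigma$.

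Next I would count the monotone $\sigma^k$'s. A weakly increasing sequence of length $T$ with entries in $\{0, 1, \ldots, J\}$ is in bijection with a multiset of size $T$ drawn from that set, so a standard stars-and-bars count gives $\binom{T+J}{J}$ such sequences. Multiplying over the $K$ value classes yields
\[
\binom{T+J}{J}^K \ \le \ (T+J)^{JK}.
\]
Since $J = \lceil 1/\epsilon \rceil = O(1/\epsilon)$ and $K = O(\log(T/\epsilon)/\epsilon)$, we have $JK = O(\log(T/\epsilon)/\epsilon^2)$, and after absorbing the constant factor in the base into the exponent this is bounded by $(1/\epsilon + T)^{O(\log(T/\epsilon)/\epsilon^2)}$, as claimed.

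The main obstacle is really just the bookkeeping in the monotonicity reduction, because the polyhedron's definition has two ``relaxed'' branches --- when $\sigma^k_t = J$ (the pinning on indices $> J$ is replaced by a single sum inequality) and when $\sigma^k_t \ge |S^{k,h}|$ (all variables are forced to one). One has to verify that these branches cannot be entered at time $t$ by a strict decrease $\sigma^k_{t-1} > \sigma^k_t$, which follows immediately from $\sigma^k_{t-1} \le \min(J, |S^{k,h}|)$. Once the monotonicity reduction is in place, the combinatorial count is entirely routine.
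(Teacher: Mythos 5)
Your argument is essentially the paper's own: the paper proves the per-$h$ count (Lemma~\ref{NumDisjunctions}) by restricting to weakly increasing $\sigma^k$ and counting via stars-and-bars per value class, then raising to the power $K$; it conditions on $\sigma^k_T = m$ and sums to get $J(J+T)^J$ per class, which is the same order as your direct $\binom{T+J}{J} \le (T+J)^J$. Your justification of why non-monotone $\sigma$ yield empty polyhedra is actually more explicit than the paper's, which simply asserts monotonicity (and your reduction does rely on the in-text restriction $\sigma^k_t \le \min(J,|S^{k,h}|)$ rather than the displayed $\sigma \in \{0,\ldots,J\}^{TK}$, since for $\sigma^k_t \ge |S^{k,h}|$ a strict decrease produces no contradiction).

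The one genuine omission is the factor $N$ in the corollary's statement: you stop at $(1/\epsilon+T)^{O(\log(T/\epsilon)/\epsilon^2)}$ and write ``as claimed,'' but that is the bound for a \emph{fixed} guess of the most valuable item $h$. The corollary counts the total number of LPs across the whole disjunctive procedure, which additionally enumerates $h$ over all $N$ items; multiplying your per-$h$ bound by $N$ gives the stated $O(N(1/\epsilon+T)^{O(\log(T/\epsilon)/\epsilon^2)})$. This is a one-line fix, not a conceptual gap.
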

Notice that when $T = O(\sqrt{\log T})$, then the number of LPs is polynomial in $N$ for a fixed $\epsilon$. \\

The following is our main result.

\begin{theorem} \label{Disjunctive procedure theorem}
For every non-empty polyhedron $Q^{\sigma, h}$,  there exists a polynomially computable point $x^{\sigma,h}$ feasible for $\IIK$,  such that  \[\sum_{t=1}^{T}\sum_{i \in S^{h}} v'_{i}x^{\sigma,h}_{i,t} \geq (1- \epsilon) \max \{ \sum_{t=1}^{T}\sum_{i \in S^{h}} v'_{i}x_{i,t}: \ x \in Q^{\sigma,h}\}.\] 
\end{theorem}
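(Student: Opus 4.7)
The plan is to solve the linear program $\max\{\sum_{i,t} v'_i x_{i,t} : x \in Q^{\sigma,h}\}$ to obtain an optimal basic feasible solution $\tilde x$, and then to define $x^{\sigma,h}$ by component-wise rounding down (so $x^{\sigma,h}_{i,t}=\lfloor\tilde x_{i,t}\rfloor\in\{0,1\}$). Feasibility for $\IIK$ is immediate: rounding down preserves the capacity inequalities (the weighted sum only decreases) and the monotonicity inequalities (if $\tilde x_{i,t-1}\le \tilde x_{i,t}$ then $\lfloor\tilde x_{i,t-1}\rfloor \le \lfloor\tilde x_{i,t}\rfloor$), and all variables fixed to $0$ or $1$ in $Q^{\sigma,h}$ remain unchanged. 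The substantive work is to bound the rounding loss $L := \sum_{(i,t):\,\tilde x_{i,t}\in(0,1)} v'_i \tilde x_{i,t}$ by $\epsilon\, V_{LP}$, where $V_{LP}$ denotes the LP optimum.

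The key structural observation is that only the ``free'' variables of $Q^{\sigma,h}$ can be fractional in $\tilde x$: namely the tail items $k_{J+1},\ldots,k_{|S^{k,h}|}$ at pairs $(k,t)$ with $\sigma^k_t=J$, and items of $T^h$ at any time. For a tail item of class $k$, $v'_i=(1-\epsilon)^{k-1}v_h$; and whenever such an item is even permitted to be fractional, the $J=\lceil 1/\epsilon\rceil$ fixed items of the same class at time $t$ already contribute $J(1-\epsilon)^{k-1}v_h \ge (1-\epsilon)^{k-1}v_h/\epsilon$, so each fractional tail item carries at most an $\epsilon$-fraction of the fixed class-$k$ value at time $t$. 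For items of $T^h$, $v'_i=v_i\le(1-\epsilon)^K v_h<\epsilon v_h/T$, and since the constraint $x_{h,T}=1$ forces $V_{LP}\ge v_h$, a bounded number of fractional $T^h$ values per time contributes at most $O(\epsilon)V_{LP}$ in total.

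To make this rigorous I would analyze the basic-solution structure of the LP: passing to incremental variables $y_{i,t}=x_{i,t}-x_{i,t-1}$ produces a packing-type formulation whose only non-bound constraints are $T$ capacity inequalities together with $N$ upper-bound inequalities $\sum_t y_{i,t}\le 1$, so at most $T+N$ of the $y_{i,t}$'s are positive at a BFS. The main obstacle is converting this bound on the $y_{i,t}$'s into a per-time bound on the fractional $x_{i,t}$'s, since monotonicity can tie fractional values together across time and a naive count of tight capacity constraints does not suffice. After resolving this by arguing (with an appropriate choice of basic optimum) that within each $(k,t)$ at most a constant number of tail items are fractional and at most a constant number of $T^h$ items are fractional per time, the per-$(k,t)$ domination of fractional by fixed value yields $L\le O(\epsilon)V_{LP}$; rescaling $\epsilon$ by the resulting constant then gives the advertised $(1-\epsilon)$ guarantee.
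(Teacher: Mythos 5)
Your overall plan --- round down an optimal basic solution of the LP over $Q^{\sigma,h}$ and charge each fractional variable either to the $J$ items of its class that are integrally fixed to $1$, or to $v_h$ --- stands or falls on the one step you explicitly defer: that at a (suitably chosen) basic optimum, each pair $(k,t)$ has only $O(1)$ fractional tail variables and each period has only $O(1)$ fractional $T^h$ variables. That is a genuine gap, and I do not see how to close it along the lines you sketch. The rank count for a vertex of $Q^{\sigma,h}$ bounds the number of fractional \emph{chains} --- maximal sets of fractional entries of a single item linked by tight precedence constraints $x_{i,t-1}=x_{i,t}$ --- by the number of non-bound rows (the $T$ capacity constraints plus the cardinality constraints $\sum_i x_{k_i,t}\ge \sigma^k_t$); it gives no bound on how many distinct items are fractional \emph{within a single period}, since up to that many chains can all pass through the same time $t$. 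With up to $\Theta(T)$ fractional items per period, your loss estimates degrade by a factor of $T$: the $T^h$ contribution becomes $O(T)\cdot\epsilon v_h$ rather than $\epsilon v_h$, and the class-$k$ contribution becomes $O(T\epsilon)$ times the class value rather than $O(\epsilon)$ times it, so the total loss is only $O(T\epsilon)V_{LP}$ --- not a PTAS unless you rescale $\epsilon$ by $T$, which in turn changes $J$ and the size of the disjunction. The incremental-variable reformulation does not rescue this: ``at most $T+N$ positive $y_{i,t}$'' still permits every item to be fractional somewhere. A secondary uncharged case: $x_{h,t}$ for $t<T$ is free (only $x_{h,T}=1$ is fixed), so item $h$ itself can be fractional at early periods, and its value $v_h$ is too large to absorb into either of your two charging buckets.

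The paper sidesteps all of this by \emph{not} rounding $\bar x$ directly. For each value class $S^{k,h}$ (and separately for $T^h$) it solves an auxiliary LP in which the capacity available at time $t$ is exactly the weight $\sum_i w_{k_i}\bar x_{k_i,t}$ that $\bar x$ devotes to that class at that time. Because all items of a class have identical modified value, the auxiliary LP is solved greedily by weight (for $T^h$, by value-to-weight ratio); the greedy solution respects the precedence constraints, dominates the class's contribution to $\bar x$, and has \emph{at most one} fractional variable per period \emph{by construction}. Only then is that single fractional variable rounded down, losing at most a $1/J\le\epsilon$ fraction of the class value in that period (resp.\ at most $\epsilon v_h/T$ per period for $T^h$, totalling $\epsilon v_h$). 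In other words, the ``one fractional variable per class per period'' property you are assuming is not a fact about vertices of $Q^{\sigma,h}$; it is manufactured by the per-class re-optimization. If you want to keep the direct-rounding route you would need to prove a substantially stronger structural statement about basic optima than linear algebra provides.
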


\begin{theorem}
Let $x^{\star} = \arg \max_{\ Q^{\sigma,h} \neq \emptyset} \sum_{t=1}^{T}\sum_{i \in S^{h}} v'_{i}x^{\sigma,h}_{i,t}$, where $x^{\sigma,h}$ is defined in the previous theorem, then we have that 
\[\sum_{t=1}^{T}\sum_{i \in S^{h}} v'_{i}x^{\star}_{i,t} \geq (1-\epsilon) V_m(OPT_m) \geq (1-\epsilon)^2 V(OPT).\]
\end{theorem}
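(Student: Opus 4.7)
The plan is to prove the two stated inequalities in turn. The right-hand inequality $(1-\epsilon) V_m(OPT_m) \ge (1-\epsilon)^2 V(OPT)$ is immediate from the chain displayed above Lemma \ref{NumDisjunctions}: optimality of $OPT_m$ for the modified instance combined with $v'_i \ge v_i$ gives $V_m(OPT_m) \ge V_m(OPT) \ge V(OPT)$, so $(1-\epsilon) V_m(OPT_m) \ge (1-\epsilon) V(OPT) \ge (1-\epsilon)^2 V(OPT)$.

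For the left-hand inequality, the strategy is to exhibit a single disjunct $(\sigma^*, h^*)$ whose LP optimum over $Q^{\sigma^*, h^*}$ is at least $V_m(OPT_m)$; since $x^{\star}$ takes the best objective over all non-empty disjuncts, Theorem \ref{Disjunctive procedure theorem} applied to that disjunct then immediately yields the bound. I would take $h^*$ to be the most valuable item placed by $OPT_m$, so $h^* \in S^{h^*}$ and, by propagating monotonicity, we can assume $OPT_m$ satisfies $x_{h^*, T} = 1$. Within each class $S^{k, h^*}$ all items share the same modified value, so without loss of generality $OPT_m$ packs its selections from $S^{k, h^*}$ in non-decreasing order of weight; letting $n^k_t$ be the number packed at time $t$, set $(\sigma^*)^k_t := \min\{n^k_t, J\}$ with $J = \lceil 1/\epsilon \rceil$.

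The next step is a case-by-case verification that $OPT_m$ lies in $Q^{\sigma^*, h^*}$, against the piecewise defining rows: when $n^k_t = 0$ the ``all zero'' row is satisfied; when $1 \le n^k_t < J$ and $n^k_t < |S^{k, h^*}|$ the pattern of the $n^k_t$ lightest items matches the third row exactly; when $n^k_t \ge J$ and $n^k_t < |S^{k, h^*}|$ the first $J$ items are packed and the cardinality constraint $\sum_i x_{k_i, t} \ge J$ is met by the additional items $OPT_m$ selects; and when $n^k_t = |S^{k, h^*}|$ the ``all ones'' row is satisfied. The capacity and monotonicity constraints hold because $OPT_m$ is feasible for $\IIK$, and items in $T^{h^*}$ are only subject to standard feasibility. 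Hence $Q^{\sigma^*, h^*}$ is non-empty and its LP optimum is at least $V_m(OPT_m)$, so Theorem \ref{Disjunctive procedure theorem} produces a polynomially computable $x^{\sigma^*, h^*}$ with $\sum_{t, i \in S^{h^*}} v'_i x^{\sigma^*, h^*}_{i,t} \ge (1-\epsilon) V_m(OPT_m)$, and by the very definition of $x^{\star}$ the desired inequality follows.

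The main obstacle is this case analysis: the definition of $Q^{\sigma, h}$ has several rows keyed to whether $\sigma^k_t$ is $0$, strictly between $1$ and $J$, equal to $J$, or at least $|S^{k, h}|$, and one must verify that the truncation $(\sigma^*)^k_t = \min\{n^k_t, J\}$ routes $OPT_m$ into the correct row in every case, especially in the $\sigma^k_t = J$ case where the polyhedron only pins down the $J$ lightest items and relies on a cardinality lower bound to absorb any further packing. Once this bookkeeping is dispatched, the rest of the argument is routine application of the preceding theorem and the chain of inequalities already established for the modified instance.
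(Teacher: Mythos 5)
Your proof is correct and follows the same route as the paper: the paper's own proof is just the two-line observation that $OPT_m$ lies in $\cup_{Q^{\sigma,h}\neq\emptyset}Q^{\sigma,h}$ and that Theorem~\ref{Disjunctive procedure theorem}, combined with taking the argmax over disjuncts, gives the $(1-\epsilon)$ factor relative to the maximum over that union --- your case analysis simply makes explicit the covering claim that the paper asserts without proof. One small point: choosing $h^*$ to be the most valuable item of $OPT_m$ creates a mismatch between the modified values $v'$ defining $Q^{\sigma^*,h^*}$ and those defining $V_m$ (which are rounded relative to $h$, the most valuable item of $OPT$); it is cleaner to take $h^*=h$ and note that $OPT_m$ may be assumed to pack $h$, since the inequality $V_m(OPT_m)\ge V_m(OPT)$ only requires optimality over solutions that do.
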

\begin{proof}
This is a direct consequence of the fact that $OPT_m \in \cup_{Q^{\sigma,h} \neq \emptyset} Q^{\sigma,h}$ as $\cup_{Q^{\sigma,h} \neq \emptyset}Q^{\sigma,h}$ covers $P$,  and that 
\[\sum_{t=1}^{T}\sum_{i \in S^{h}} v'_{i}x^{\star}_{i,t} \geq (1-\epsilon)\max \{\sum_{t=1}^{T}\sum_{i \in S^{h}} v'_{i}x_{i,t}: \ x \in \cup_{Q^{\sigma,h} \neq \emptyset}Q^{\sigma,h}\}.\]
\end{proof}

\section{Conclusion} \label{Conclusion}
In this work, we give a constant factor approximation algorithm for $\IK$ when the capacity function $B_t$ is upper bounded by a polynomial function of $t$. We also give a PTAS for $\IIK$ when the time horizon $T = O(\sqrt{\log N})$, where $N$ is the number of items. Our results generalize and improve on some of the earlier results of Hartline and Sharp for this problem. Our work leaves to the following open questions. First, is there a polynomial time algorithm for $\IK$ with a constant factor approximation ratio that makes no assumption on the growth rate of $B_t$? Second, is there a PTAS for $\IIK$ for an arbitrary time horizon $T$? It is also interesting to consider an incremental version of other combinatorial optimization problems.


\bibliographystyle{plain}
\bibliography{paper}

\newpage
\normalsize

\section{\appendixname}

\subsection*{Proof of Proposition~\ref{Strongly NP-hard}}

\begin{proof}
In the $3$ partition problem, we're given a set $S$ of $3m$ integers $a_1, \ldots a_{3m}$, and we want to decide whether $S$ can be partitioned into $m$ triples that all have the same sum $B$, where $B = (1/m)\sum_{i=1}^{3m} a_i$.  It has been shown that $3$-partition is strongly NP-hard even if all the integers are between $B/4$ and $B/2$. We will reduce any instance of the 3-partition problem to a corresponding instance of the incremental subset problem. Given a set $S$ of $3m$ integers $a_1, \ldots, a_{3m}$. Let $a_i$ be the weight/value of item $i$. And we have a knapsack whose capacity is $B_t$ in period $t$ for $t = 1, \ldots, m$. Notice that the partition can be done if and only if in every period the knapsack reaches its capacity, i.e. the incremental amount that we pack is $B$. Moreover, it can be shown inductively that since the value of the items are strictly in between $B/4$ and $B/2$, we would pack exactly $3$ additional items in every period.
\end{proof}

\subsection*{Proof of Theorem~\ref{Integrality Gap}}
\begin{proof}
Fix a $k \geq 2$ and let $T$ be a power of $k$. Consider a set of $N$ items, where $v_i = w_i = k^i$ for $i = 1,\ldots, \log_k(T) = N$. The knapsack capacities follow the following pattern: \[B_t = k^{i} \ \text{if} \ T(1-\frac{1}{k^{i-1}})+1\leq t \leq T (1- \frac{1}{k^{i}}) \ \text{for} \ i = 1,\ldots, \log_k(T) \ \text{and} \ B_{T} = B_{T-1}.\]  Since the LP can fractional pack the items, the knapsack attains its capacity in every time period. Moreover, since all items have weight equaling value, we have that the optimal value of the LP solution is (by evaluating the sum of the knapsack values over all time periods):
\[T + T \sum_{i=1}^{\log_k(T)}k^iT((1-1/k^{i}) - (1-1/k^{i-1}) = T(k-1)\log_k(T) + T = O(Tk\log_k (T)).\]
Let $t_i = T(1-1/k^{i-1})+1$ denote the first time when the knapsack capacity increases to $k^{i}$. Notice that any integer optimal solution would only pack additional items in time those periods, which means that we just need to figure out what to pack at those time periods in order to find an integer optimal solution. The only items that fit in the knapsack at time $t_i$ are items $0$ through $i$. If we decide to pack item $i$ in period $t_{i}$, then the total revenue we get for packing $i$ over times $t_{i} \leq t \leq t_{i+1} - 1$ is $T(1/k^i - 1/k^{i+1})k^{i+1} =T( k-1)$. Since we cannot pack any item before time $t_{i}$, the total revenue we get up to time $t_{i+1} - 1$  if we pack item $i$ in time $t_{i}$ would be $T(k-1)$. For every $i > 1$, this is clearly suboptimal since we would get more revenue up to time $t_{i+1}-1$ had we just packed item $1$ in period $1$ (since $ kT(1-1/k^{i+1}) > kT(1-1/k) = T(k-1)$ for $i > 1$). Hence, no integer optimal solution would pack item $i$ at time $t_i$ for every $i > 1$.\\ 

If we do not pack item $i$ at time $t_{i}$, then we can pack the first $i-1$ items at time $t_{i}$ for every $i > 1$ since $k\geq2$. Hence, this is an optimal packing for time $t_i$ for every $i > 1$ (and it is optimal to pack item $1$ starting from period $1$). Moreover, this solution respects the precedence constraints, which means that we have found an integer optimal solution. We evaluate the integer optimal solution by looking at how long each item has been placed in the knapsack:
\[
\begin{split}
kT + \sum_{i = 2}^{N-1} k^i(T - t_{i+1} + 1) &= kT + \sum_{i = 2}^{N-1} k^i(T - T(1-1/k^{i})) \simeq kT + T (\log_k T-1) \\
&= O(T\max(k, \log_k (T))).
\end{split}
\] 

Hence, the integrality gap is at least $\min(\log_k(T), k)$. For every $k$, we can choose $T$ appropriately so that $\min(\log_k(T), k) = k$.  Letting $k$ go to infinity and we have the desired result.  
\end{proof}

\subsection*{Proof of of Lemma \ref{onefrac}}
\begin{proof}
Let $\hat t$ be a time period such that there exist items $i$, $j$ with $\tilde x_{i,\hat t}$
and $\tilde x_{j,\hat t}$ both fractional.  Without loss of generality assume that
$v_i/w_i > v_j/w_j$.  Let $t^1 = \min\{ t \, : \, \tilde x_{j,t} > 0 \}$, and 
$\epsilon = \tilde x_{j, t^1}$.
Let 
$t^2 = \max\{ t \, : \, \tilde x_{i, t} = \tilde x_{i, \hat t}\}$,   
and when $t^2 < T$ set $\delta = \tilde x_{i, t^2 + 1} - \tilde x_{i, \hat t}$ and otherwise set $\delta = 1 - \tilde x_{i, \hat t}$.  Finally, write $\theta = \min\left\{\frac{w_i}{w_j}\delta, \epsilon\right\}$. Consider
the vector $y$ created by the following rule:
\begin{eqnarray}
&& y_{j, t} = \tilde x_{j, t} - \theta, \quad \ \, \mbox {for $t^1 \le t \le t^2$}, \nonumber \\
&& y_{i, t} = \tilde x_{i, t} + \frac{w_j}{w_i}\theta,  \ \mbox {for $t^1 \le t \le t^2$}, \nonumber \\
&& y_{k, t} = \tilde x_{k, t},  \quad \quad \quad \mbox {for all remaining $k$ and $t$}. \nonumber 
\end{eqnarray}
Note that $i, j \neq h$ because constraint (\ref{enumhighesta}) of the formulation guarantees that $\tilde x_{h, t}$ is integral for all $t$.  Thus, $y$ is a feasible solution to $D^R(\bar t, \breve t, h)$. But since $v_i/w_i > v_j/w_j$ 
(because we have perturbed data so that all values/weight ratios are distinct)
the objective value attained by $y$ is strictly larger than that of $\tilde x$, a contradiction. 
\end{proof}

\subsection*{Proof of of Lemma \ref{rounddown}}
\begin{proof}  For any period $t$, let $F(t) \doteq \{ i \, : \, 0 < \tilde x_{i,t} < 1 \}$.  
Suppose first 
that
$$ \sum _{t = 1}^{\bar t - 1} \sum_{i \in F(t)} v_i \Delta_t \tilde x_{it} < \frac{1}{6} V^*.$$
By constraint (\ref{bottpart}) we therefore have
\begin{eqnarray}
&& \sum _{t = 1}^{\bar t - 1} \sum_{i \notin F(t)} v_i \Delta_t \breve x_{it} > \frac{1}{6} V^* \notag
\end{eqnarray}
and 
\begin{eqnarray}
&& \sum_{i,t} v_i \Delta t \, \breve x_{i,t} \ \ge \ \min \frac{1}{6} \left\{1, \frac{\sum_{t = \bar t}^{T} \Delta_t}{ \sum_{t = 1}^{\bar t -1} \Delta_t} \right\}  \sum_{i,t} v_i \Delta t \, \tilde x_{i,t}.  \label{niceroundown2}
\end{eqnarray}
(i.e. (\ref{niceroundown})) follows, which is the desired result. Thus, we instead assume that
$$ \sum _{t = 1}^{\bar t - 1} \sum_{i \in F(t)} v_i \Delta_t \tilde x_{it} \ge \frac{1}{6} V^*.$$
In particular this means that $F(t) \neq \emptyset$ for at least one $1 \le t \le \bar t -1$.  But, for such 
$t$, if $i \in F(t)$ then $\tilde x_{i,\bar t} > 0$ and by (\ref{enumhighesta}) and (\ref{enumhighestb}), 
$$ v_i \le v_{h}.$$
So, using Lemma \ref{onefrac},
\begin{eqnarray}
\frac{1}{6} V^* & < & \sum _{t = 1}^{\bar t - 1} \sum_{i \in F(t)} v_i \Delta_t \tilde x_{it} \ \le \ \sum _{t = 1}^{\bar t - 1} v_{h} \Delta_t \ \le \ \frac{ \sum_{t = 1}^{\bar t -1} \Delta_t}{\sum_{t = \bar t}^{T} \Delta_t} \sum_{t = \bar t}^T \sum_i v_i \Delta t\, \breve x_{i, t}.
\end{eqnarray}
From this relationship (\ref{niceroundown2}) follows. 
\end{proof}

\subsection*{Proof of of Lemma \ref{onethird}}
\begin{proof}
Define $t^{[3]}$ to be the largest period so that (a) holds.  If $t^{[3]} = T$ then the
replicated-knapsack solution at $T$ has value at least $ \frac{V^Z}{3}$.  So we can assume $t^{[3]} < T$ (and similarly,
that $1 < t^{[3]})$ and therefore
$$ \sum_{t = 1}^{t^{[3]} - 1} \sum_i v_i \Delta_t x^Z_{it} < \frac{1}{3}V^Z.$$
Hence if (b) does not hold, then
$$ \sum_i v_i \Delta_t x^Z_{it^{[3]}} > \frac{1}{3}V^Z,$$
and so the replicated-knapsack solution at $t^{[3]}$ has value at least $ \frac{V^Z}{3}$. 
\end{proof}

\subsection*{Proof of of Lemma \ref{easycase}}
\begin{proof}
We observe that if, under $x^Z$, some item $i$ is added to the knapsack 
at time $t$, then, since $x^Z$ is optimal and $v_i > 0$, it must be the case that it could not have been added any earlier (while keeping
the remaining schedule fixed, otherwise).  Under this assumption, the schedule is ``pushed to the
left'' as much as possible.

Suppose first that no items are added to the knapsack, under $x^Z$, in the periods $t^{[3]}, \ldots, T$. 
Then by condition (a) of Lemma \ref{onethird} we have that by replicating the
knapsack solution at time $t^{[3]}$ we will have total value is at least $V^*/3$, as desired.
 
Thus, let $t_1$ be the first period $\ge t^{[3]}$ such that $x^Z$ adds an item is added to the knapsack
at that period. By the ``pushed to the left'' analysis, we have that 
either $t_1 = t^{[3]}$, or
\begin{eqnarray}
&& t_1 > t^{[3]}, \quad \mbox{and} \nonumber \\
&& \sum_i w_i x_{i, t_1} > B_{t^{[3]}}. \label{high1}
\end{eqnarray}

Let $t_2$ be the first time period after $t_1$ where $x^Z$ adds an item knapsack.
If no such period exists, let $t_2 = T$. Finally let $A(1)$ be the set of items added by $x^Z$ to the knapsack 
in period $t_1$, and $A(2)$ the set of items added in periods $t_2$ through $T$.  Now the sum 
$$ \sum_{t = t^{[3]}}^T \sum_i v_i \Delta_t x^Z_{it}$$
which by condition (a) of Lemma \ref{onethird} is at least $V^*/3$, can be split into three terms, some
of which may be empty:
\begin{itemize}
\item [(a)] $(\sum_{t = t^{[3]}}^{T} \Delta_t)(\sum_i v_i x_{i, t^{[3]}})$
\item [(b)] $(\sum_{t = t_1}^{T} \Delta_t)(\sum_{i \in A(1)} v_i)$
\item [(c)] $\sum_{t = t_2}^T \Delta t \sum_{i \in A(2)} v_i x_{i, t}$.
\end{itemize}
So the largest of these three terms has value at least $V^*/9$.  If it is (a) or (b)
then we have that a replicated-knapsack solution of value at least $V^*/9$ and
we are done.  If it is (c), then note that by (\ref{high1}) (and the pushed to the left
condition) 
\begin{eqnarray}
&& \sum_{i \in A(2)} w_i \le B_{T} - B_{t^{[3]}} \le B_{T} - B_{t_{1/2}} \le B_{t_{1/2}}, \label{packit}
\end{eqnarray}
where the inequalities follow because $t_{1/2} < t^{[3]}$ and $B_t$ is nondecreasing. It follows that we obtain a
feasible solution to $\IK$ by placing in the knapsack the set $A(2)$, at time $t_{1/2}$ (and no items added at any other
time).  This solution is feasible by (\ref{packit}), and its value is at least the quantity in (c) and so at least $V^*/9$.  This is the same as saying
that the replicated-knapsack solution at $t_{1/2}$ has value at least $V^*/9$. 
\end{proof}

\subsection*{Extension of Lemma \ref{halflemma} to general $0 < \kappa < 1$}
Here we outline how to extend Lemma \ref{halflemma} to obtain, for general $0 < \kappa < 1$, an approximation algorithm
with ratio $\min\{ \Omega(\frac{ 1 - \kappa }{\kappa}), \frac{1}{S(\kappa)} \}$.  The second case in the ``min'' 
corresponds to the case where $t^{[3]} < t_{\kappa}$, and it
follows by an analysis very similar to that of Lemma \ref{hardcase} (the reader will notice that in that proof, the
fact that $\kappa = 1/2$ was not actually used).  

Now we consider the case $t^{[3]} \ge t_{\kappa}$. First, we note that if $\kappa < 1/2$ is such that
an $(S, \kappa)$-split takes place at time $t$, then an $(S, 1/2)$-split takes place at $t$, as well (from (ii)) in 
Definition \ref{split}.  So we will assume $1/2 < \kappa < 1$.

Now suppose an $(S, \kappa)$-split takes place at time $t_{\kappa}$.  The idea is to split the time interval $[t_{\kappa}, T]$ into intervals $[t^0, t^1], \, [t^1, t^2], \ldots [t^{m-1}, t^m]$, where $t_{\kappa} = t^0 \le t^1 \le \ldots t^m = T$, $m = O( \kappa/(1 - \kappa))$, and the capacity increase experienced in each interval $[t^i, t^{i+1}]$ is $O(B_T/(1 - \kappa)$.  Then, at the boundary between successive intervals we apply an analysis similar to that used to prove Lemma \ref{easycase} to consider (1) two possible replicated-knapsacks (as in cases (a), (b) of the proof of Lemma \ref{easycase}), or (2) 
a solution similar to that in case (c) of Lemma \ref{easycase}, which in this case will use (by time $T$) total capacity
at most $(1 - \kappa)B_T$, and thus can be used to lower bound the replicated-knapsack solution at time $t_{\kappa}$.
Altogether, therefore, we obtain $O(3 \kappa/(1 - \kappa))$ replicated-knapsack solutions, and thus the best attains
an approximation factor

\subsection*{Proof of Lemma~\ref{NumDisjunctions}}
\begin{proof}
For a fixed $k \in \{1, \ldots, K\}$, we have that $\sigma^{k}_1 \leq \sigma^{k}_2 \leq \ldots \leq \sigma^{k}_T$. If $\sigma^{k}_T = m$, then since the $\sigma^{k}_i$s are integers, there are at most ${m+T-1 \choose m}$ feasible $T$-tuples $(\sigma^{k}_1, \sigma^{k}_2, \ldots , \sigma^{k}_T)$. Since $0 \leq m \leq J$, we have that ${m+T-1 \choose m} \leq (J+T)^J$. Consequently, there are at most $\sum_{m=1}^{J} {m+T-1 \choose m} \leq J(J+T)^J$ feasible $T$-tuples $(\sigma^{k}_1, \sigma^{k}_2, \ldots , \sigma^{k}_T)$. Thus, there are at most $(J(J+T)^J)^{K} =  O((1/\epsilon + T)^{O(\log(T/\epsilon )/\epsilon^2)})$ in the disjunctive procedure. \qed
\end{proof}

\subsection*{Proof of Theorem~\ref{Disjunctive procedure theorem}}

\begin{proof}
Let $\bar{x}$ be an optimal solution of $\max \{ \sum_{t=1}^{T}\sum_{i \in S^{h}} v'_{i}x_{i,t}: \ x \in Q^{\sigma,h}\}$. In order to prove the theorem, we will show the validity of the inequality (see Lemma \ref{inequality lemma 1})
\begin{align} \label{event rounding}\sum_{t =1}^{T}  \sum_{i=1}^{|S^{k, h}|} v'_ix^{\sigma,h}_{k_i,t} \geq (1-\epsilon) \sum_{t =1}^{T} \sum_{i=1}^{|S^{k, h}|} v'_i\bar{x}_{k_i,t}, \end{align} 
for every $S^{k,h}$ and that of (see Lemma \ref{inequality lemma 2})
\begin{align} \label{terminal rounding} \sum_{t=1}^{T} \sum_{i \in T^{h}} v'_ix^{\sigma,h}_{i,t} \geq \sum_{t=1}^{T}  \sum_{i \in T^{h}} v'_i\bar{x}_{i,t} - \epsilon v_h.\end{align}

The two inequalities imply:
\[\begin{split}
\sum_{t=1}^{T}\sum_{i \in S^{h}} v'_{i}x^{\sigma,h}_{i,t} &= 
v_h + \sum_{t =1}^{T}  \sum_{k=1}^{K} \sum_{i=1}^{|S^{k, h}|} v'_ix^{\sigma,h}_{k_i,t} + \sum_{t=1}^{T}  \sum_{i \in T^{h}} v'_ix^{\sigma,h}_{i,t} \\
&\geq v_h + (1-\epsilon)\sum_{t =1}^{T}  \sum_{k=1}^{K} \sum_{i=1}^{|S^{k, h}|} v'_i \bar{x}_{k_i,t} + \sum_{t=1}^{T}  \sum_{i \in T^{h}} v'_i\bar{x}_{i,t} - \epsilon v_h \\
&\geq \sum_{t=1}^{T}\sum_{i \in S^{h}} v'_{i}\bar{x}_{i,t}.
\end{split}
\]

\end{proof}

Next, we give proofs for equations (\ref{event rounding}) and (\ref{terminal rounding}). For every nonempty polyhedron $Q^{\sigma,h}$, we begin by showing (\ref{event rounding}) holds for every $S^{k,h}$ with the help of the following auxiliary LP.

\[
\begin{aligned}
\text{max} \quad
\sum_{t =1}^{T} \sum_{i=1}^{|S^{k,h}|} v'_{k_i}x_{k_i,t}  \notag \\
\text{s.t.} \quad
\sum_{i \in T^{h}}^{|S^{k,h}|}  w_{k_i}x_{k_i,t} \leq \sum_{i=1}^{|S^{k,h}|} & w_{k_i}\bar{x}_{k_i,t} \quad \forall t \notag \\
x_{k_1,t} = x_{k_2,t} = \ldots = x_{k_{|S^{k,h}|},t} = 0 &\ \quad \forall (k,t) \ \text{s.t.} \ \sigma^{k}_{t} = 0 \notag \\
x_{k_1,t} = x_{k_2,t} = \ldots = x_{k_{\sigma^{k}_{t}},t} = 1, & \\ \ x_{k_{\sigma^{k}_{t}} + 1,t} = \ldots = x_{k_{|S^{k,h}|},t} = 0 &\quad \forall (k,t) \ \text{s.t.} \  1 \leq \sigma^{k}_{t} < J \ \text{and} \ \sigma^{k}_t < |S^{k,h}| \notag \\
x_{k_1,t} = x_{k_2,t} = \ldots = x_{k_{\sigma^{k}_{t}},t} = 1, \ \sum_{i=1}^{|S^{k,h}|} x_{k_i, t} \geq \sigma^{k}_{t} & \quad \forall (k,t) \ \text{s.t.} \ \sigma^{k}_{t} = J \ \text{and} \ \sigma^{k}_t < |S^{k,h}| \notag \\
x_{k_1,t} = x_{k_2,t} = \ldots =  x_{k_{|S^{k,h}|},t} = 1\  &\quad \forall (k,t) \ \text{s.t.} \ \sigma^{k}_t \geq |S^{k,h}| \notag \\
x_{k_i, t-1} \leq x_{k_i, t} \quad &\forall k_i, \text{and} \ t = 2, 3, \ldots, T \notag \\
x_{k_i, t-1} \in [0,1] \quad &\forall k_i, t. \notag \\  \notag
\end{aligned}
\]
\begin{lemma} \label{number fractional variable}
For every $S^{k,h}$, there exists an optimal solution to the auxiliary LP that contains at most one fractional variable $x_{k_i,t}$ in each time period $t$. 
\end{lemma}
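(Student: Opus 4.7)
The plan is to adapt the exchange argument of Lemma~\ref{onefrac} to the auxiliary LP.  The crucial observation is that every item in the class $S^{k,h}$ carries the same modified value $v' = (1-\epsilon)^{k-1}v_h$, so the LP's objective equals $v'\sum_{i,t}x_{k_i,t}$; consequently a weight-preserving exchange that transfers mass from a heavier to a lighter fractional variable must strictly increase the objective.  As a preliminary step I would perturb the weights in $S^{k,h}$ so that all $w_{k_i}$ are pairwise distinct, absorbing the resulting arbitrarily small error in the statement.

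Let $\tilde x$ be optimal and suppose, for contradiction, that some period $\hat t$ carries two fractional entries $\tilde x_{k_i,\hat t}$ and $\tilde x_{k_j,\hat t}$.  A case analysis of the definition of $Q^{\sigma,h}$ shows that fractionality at $(k_\ell,t)$ is possible only when $\sigma^{k}_{t}=J$ and $\sigma^{k}_{t}<|S^{k,h}|$, and furthermore $\ell>J$; in every other $\sigma$-case the variable is pinned to $0$ or to $1$.  Hence $i,j>J$ and $\sigma^{k}_{\hat t}=J$.  Taking $w_{k_i}<w_{k_j}$ without loss of generality, I would define $t^{1}=\min\{t:\tilde x_{k_j,t}>0\}$, $\epsilon_{0}=\tilde x_{k_j,t^{1}}$, $t^{2}=\max\{t:\tilde x_{k_i,t}=\tilde x_{k_i,\hat t}\}$, and $\delta=\tilde x_{k_i,t^{2}+1}-\tilde x_{k_i,\hat t}$ (or $\delta=1-\tilde x_{k_i,\hat t}$ if $t^{2}=T$).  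Both $\epsilon_{0}$ and $\delta$ are strictly positive, so $\theta=\min\{(w_{k_i}/w_{k_j})\delta,\,\epsilon_{0}\}>0$.  I would then construct $y$ by setting $y_{k_i,t}=\tilde x_{k_i,t}+(w_{k_j}/w_{k_i})\theta$ and $y_{k_j,t}=\tilde x_{k_j,t}-\theta$ for every $t\in[t^{1},t^{2}]$, leaving $y=\tilde x$ elsewhere.  The per-period weight used is unchanged; the choices of $t^1, t^2$ and $\theta$ guarantee both the $[0,1]$ bounds and the monotonicity constraints; and the objective strictly increases by $(t^{2}-t^{1}+1)\,v'\theta\,(w_{k_j}/w_{k_i}-1)>0$, contradicting the optimality of $\tilde x$.

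The main technical obstacle is verifying that $x_{k_i,t}$ and $x_{k_j,t}$ are genuinely free variables throughout $[t^{1},t^{2}]$, rather than being pinned by the $\sigma$-constraints; equivalently, that $\sigma^{k}_{t}=J$ and $\sigma^{k}_{t}<|S^{k,h}|$ for all such $t$.  I would derive this as follows: monotonicity forces $\tilde x_{k_j,t}\ge\epsilon_{0}>0$ on the interval, which, together with $k_j>J$, rules out the cases $\sigma^{k}_{t}<J$; on the other hand, the choice of $t^{2}$ gives $\tilde x_{k_i,t}\le\tilde x_{k_i,\hat t}<1$, which rules out the case $\sigma^{k}_{t}\ge|S^{k,h}|$.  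The only remaining possibility is $\sigma^{k}_{t}=J$ with $\sigma^{k}_{t}<|S^{k,h}|$, which leaves both variables free.  The auxiliary sum constraint $\sum_{i}x_{k_i,t}\ge J$, when imposed, is only relaxed upward by the exchange (since $(w_{k_j}/w_{k_i})\theta>\theta$), so once this structural fact is pinned down the remaining verifications are immediate and the contradiction goes through.
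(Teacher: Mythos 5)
Your proof is correct in substance but follows a genuinely different route from the paper's. The paper proves Lemma \ref{number fractional variable} constructively: it observes that every period outside the case $\sigma^k_t = J$, $\sigma^k_t < |S^{k,h}|$ has all variables pinned to $0$ or $1$, decomposes the remaining periods (ignoring precedence) into independent single-period LPs, solves each greedily by packing the lightest items of the class first up to the budget $\sum_i w_{k_i}\bar x_{k_i,t}$, and notes that because these budgets are nondecreasing in $t$ the greedy prefixes are nested, so the precedence constraints come for free. You instead adapt the exchange argument of Lemma \ref{onefrac}, exploiting that within a class all modified values coincide, so a weight-preserving transfer from a heavier to a lighter fractional item strictly increases $\sum_i x_{k_i,t}$ and hence the objective; your case analysis correctly pins down that both fractional indices exceed $J$, that only the case $\sigma^k_t = J$, $\sigma^k_t < |S^{k,h}|$ can occur throughout $[t^1,t^2]$, and that the prefix pinning, the constraint $\sum_i x_{k_i,t}\ge J$, the box, precedence, and per-period weight constraints all survive the exchange. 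The paper's route is shorter and produces the solution the algorithm actually computes; yours proves the stronger fact that (with distinct weights) every optimal solution has the one-fractional-per-period structure, and it sidesteps having to argue that the single-period greedy solutions are mutually consistent. The one spot to patch is the opening weight perturbation: Lemma \ref{number fractional variable} is an exact existence claim feeding the exact inequality chain of Lemma \ref{inequality lemma 1}, and perturbing the $w_{k_i}$ alters both sides of the auxiliary capacity constraint and, in principle, the feasibility of the final rounded point against the original budgets, so ``absorbing an arbitrarily small error in the statement'' is not the right fix. Ties are handled cleanly without perturbation: when $w_{k_i} = w_{k_j}$ the same exchange is objective-neutral, and choosing $\theta$ to drive one of the two variables to an integer strictly reduces the number of fractional entries, so iterating from any optimal solution reaches one with at most one fractional variable per period.
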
 
\begin{proof}
The claim is true when $\sigma^{k}_{t} < J \ \text{and} \ \sigma^{k}_t < |S^{k,h}|$ or when $\sigma^{k}_t \geq |S^{k,h}|$ as there are no fractional variables in both cases. Hence, the only case left is when $\sigma^{k}_{t} = J \ \text{and} \ \sigma^{k}_t \geq |S^{k,h}|$.\\

Let $t^{\star}$ be the first (smallest) period in which we are in the case $\sigma^{k}_{t} = J \ \text{and} \ \sigma^{k}_t \geq |S^{k,h}|$.  Ignoring the precedence constraints for a moment, then the auxiliary LP can be broken up into $T-t^{\star}+1$ single period LPs of the following form, one for each $t \geq t^{\star}$. 

\begin{align}
LP_t = \text{max} \quad
\sum_{i=1}^{|S^{k,h}|} v'_ix_{k_i,t}  \notag \\
\text{s.t.} \quad
\sum_{i=1}^{|S^{k,h}|}  w_ix_{k_i,t} \leq \sum_{i=1}^{|S^{k,h}|} & w_i\bar{x}_{k_i,t} \label{capacity constraint 1} \\
x_{k_1,t} = x_{k_2,t} = \ldots = x_{k_{\sigma^{k}_{t}},t} = 1, \ \sum_{i=1}^{|S^{k,h}|} x_{k_i, t} \geq \sigma^{k}_{t} &\quad \forall (k,t) \ \text{s.t.} \ \sigma^{k}_{t} = J \ \text{and} \ \sigma^{k}_t < |S^{k,h}| \label{event constraint 1}\\
x_{k_i, t-1} \in [0,1] \quad &\forall k_i. \notag \\  \notag
\end{align}

Notice that in the modified instance of the problem, all items have the same value within a value class $S^{k,h}$. Hence, an optimal solution to $LP_t$ is simply to pack the items in the order of their weight, starting from the smallest weight item first. Moreover, notice that this set of optimal solutions satisfy the precedence constraints. 
\end{proof}

\begin{lemma} \label{inequality lemma 1}
Let $\bar{x}$ be an optimal solution to the optimization problem over a non-empty $Q^{\sigma, h}$ for some $\sigma \in \{0,\ldots, J\}^{TK}$ and $h \in S$, then there exists an integer feasible solution $x^{\sigma,h}$ to the auxiliary LP such that 
\[\sum_{t =1}^{T} \sum_{i=1}^{|S^{k,h}|} v'_ix^{\sigma,h}_{k_i,t} \geq (1-\epsilon) \sum_{t =1}^{T} \sum_{i=1}^{|S^{k,h}|} v'_i\bar{x}_{k_i,t}.\]
\end{lemma}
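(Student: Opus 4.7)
The plan is to apply Lemma \ref{number fractional variable} to each class $k$ to obtain an auxiliary-LP optimum $\tilde x^{k}$ with at most one fractional variable per time period, and then define $x^{\sigma,h}$ by rounding $\tilde x^{k}$ coordinate-wise down. I would first observe that the restriction of $\bar x$ to the items of $S^{k,h}$ is itself feasible for the auxiliary LP: the structural constraints on the $x_{k_i,t}$'s are inherited verbatim from $Q^{\sigma,h}$, the capacity RHS is defined to equal exactly the weight that $\bar x$ assigns to class $k$ at each time $t$, and $\bar x$ already respects precedence and the $[0,1]$ bounds. Consequently the LP objective of $\tilde x^{k}$ is at least $\sum_{t,i} v'_{i}\bar x_{k_i,t}$, the quantity we need to bound against.

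Next I would verify that the rounded vector $x^{\sigma,h}_{k_i,t} = \lfloor \tilde x^{k}_{k_i,t}\rfloor$ is feasible for the auxiliary LP. As already exploited in the proof of Lemma \ref{number fractional variable}, because every item in class $k$ has the same modified value $(1-\epsilon)^{k-1}v_h$, the LP decomposes per period into a unit-valued knapsack optimally packed by greedy in weight order $k_1,k_2,\ldots$; since the capacity RHS $C_t = \sum_i w_{k_i}\bar x_{k_i,t}$ is weakly increasing in $t$ (because $\bar x$ itself satisfies precedence), the number $m_t$ of items set fully to $1$ in $\tilde x^{k}$ is non-decreasing in $t$, and any fractional item is the single next item in the weight order. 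Rounding down therefore preserves precedence, cannot increase weight usage (so the capacity constraint remains valid), keeps forced-to-one items intact (they are never the fractional one), and maintains the lower bound $\sum_i x_{k_i,t}\ge\sigma^{k}_t$ in the only case where it is not automatically satisfied.

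Finally, I would bound the value loss by a case analysis on $\sigma^{k}_t$. In the cases $\sigma^{k}_t = 0$, $1\le\sigma^{k}_t<J$ with $\sigma^{k}_t<|S^{k,h}|$, and $\sigma^{k}_t\ge|S^{k,h}|$, the auxiliary LP already fixes every class-$k$ variable in period $t$ to $\{0,1\}$, so rounding is free. In the remaining case $\sigma^{k}_t = J$ with $\sigma^{k}_t<|S^{k,h}|$, the period-$t$ contribution of class $k$ to the LP objective equals $(m_t+f_t)(1-\epsilon)^{k-1}v_h$ with $m_t\ge J$ and $f_t\in(0,1)$, while the rounded contribution is $m_t(1-\epsilon)^{k-1}v_h$. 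Since $J=\lceil 1/\epsilon\rceil$, the per-period ratio obeys
\[
\frac{m_t}{m_t+f_t}\ \ge\ \frac{J}{J+1}\ \ge\ \frac{\lceil 1/\epsilon\rceil}{\lceil 1/\epsilon\rceil+1}\ \ge\ 1-\epsilon.
\]
Summing over $t$ and chaining with the bound on $\tilde x^{k}$ from the first paragraph yields (\ref{event rounding}). The step I expect to be the main obstacle is the feasibility check after rounding, and specifically the preservation of precedence: this hinges on the monotone greedy-by-weight structure of the LP optimum, which forces $m_t$ to be non-decreasing in $t$ and the lone fractional item to be the next item in weight order, so that rounding down produces a nested, precedence-respecting schedule.
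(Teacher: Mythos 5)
Your proposal is correct and follows essentially the same route as the paper's proof: restrict $\bar x$ to class $k$ to lower-bound the auxiliary LP optimum, invoke Lemma \ref{number fractional variable} to get at most one fractional variable per period, round down, and observe that in the only nontrivial case ($\sigma^k_t = J$, $\sigma^k_t < |S^{k,h}|$) at least $J=\lceil 1/\epsilon\rceil$ equal-valued items are packed integrally, so the per-period loss is at most an $\epsilon$ fraction. You additionally spell out the feasibility of the rounded vector (monotone capacities, nested greedy packings, preservation of precedence), which the paper leaves implicit; this is a welcome completion rather than a different argument.
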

\begin{proof}
Without lost of generality, let $\hat{x}$ be the optimal solution to the auxiliary LP found using lemma \ref{number fractional variable}.  For time periods $t$ where $\sigma^{k}_{t} < J \ \text{and} \ \sigma^{k}_t < |S^{k,h}|$ or when $\sigma^{k}_t \geq |S^{k,h}|$, we don't need to round the variables $\hat{x}_{k_i,t}$ since they are already integral. Hence, we set $x^{\sigma,h}_{k_i,t} = \hat{x}_{k_i,t}$ for all the variables in this period, which implies that 

\[ \sum_{i=1}^{|S^{k,h}|} v'_ix^{\sigma,h}_{k_i,t} = \sum_{i=1}^{|S^{k,h}|} v'_i\hat{x}_{k_i,t} \geq  \sum_{i=1}^{|S^{k,h}|} v'_i\bar{x}_{k_i,t}\]

for such a period $t$. \\

For time periods $t$ where $\sigma^{k}_{t} = J \ \text{and} \ \sigma^{k}_t \geq |S^{k,h}|$, by lemma \ref{number fractional variable},  there is at most one fractional $\hat{x}_{k_i,t}$ in such a time period. Consequently, we round down this fractional variable while keeping others the same (or equivalently, setting $x^{\sigma,h}_{k_i,t} = 0$ for this variable and setting $x^{\sigma,h}_{k_i,t} = \hat{x}_{k_i,t}$). Since all the variables have the same value within a value class $S^{k,h}$, we have that 

\[ \frac{\sum_{i=1}^{|S^{k,h}|} v'_i\hat{x}_{k_i,t} - \sum_{i=1}^{|S^{k,h}|} v'_ix^{\sigma,h}_{k_i,t}}{\sum_{i=1}^{|S^{k,h}|} v'_i\hat{x}_{k_i,t}} \leq \frac{1}{J} < \epsilon,\]

which implies that 
\[ \sum_{i=1}^{|S^{k,h}|} v'_ix^{\sigma,h}_{k_i,t} \geq (1 -\epsilon)\sum_{i=1}^{|S^{k,h}|} v'_i\hat{x}_{k_i,t} \geq (1-\epsilon)  \sum_{i=1}^{|S^{k,h}|} v'_i\bar{x}_{k_i,t}.\]

Summing up the above inequalities over all time periods gives us the desired result.
\end{proof}

\begin{lemma} \label{inequality lemma 2}
Let $\bar{x}$ be an optimal solution to the optimization problem over a non-empty $Q^{\sigma, h}$ for some $\sigma \in \{0,\ldots, J\}^{TK}$ and $h \in S$, then there exists an integer feasible solution $x^{\sigma,h}$ to the auxiliary LP such that 
\[\sum_{t=1}^{T}  \sum_{i \in T^{h}} v'_ix^{\sigma,h}_{i,t} \geq \sum_{t=1}^{T}  \sum_{i \in T^{h}} v'_i\bar{x}_{i,t} - \epsilon v_h.\]
\end{lemma}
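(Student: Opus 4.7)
My plan is to follow the same template used for each value class $S^{k,h}$ in Lemma \ref{inequality lemma 1}, now applied to the terminal class $T^h$: set up an auxiliary LP on the $T^h$ variables alone, prove that an optimal vertex solution has at most one fractional variable per time period, and round down. The crucial observation driving the bound is that every item in $T^h$ has $v'_i = v_i \le (1-\epsilon)^K v_h < (\epsilon/T)\,v_h$, so losing at most one item per period costs at most $\epsilon v_h$ in total.

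First I would freeze the integral assignment on the value classes produced by Lemma \ref{inequality lemma 1} and consider the LP
\[
\max \sum_{t=1}^T \sum_{i \in T^h} v'_i x_{i,t} \quad \text{s.t.}\quad \sum_{i \in T^h} w_i x_{i,t} \le \sum_{i \in T^h} w_i \bar x_{i,t}\ \forall t,\quad x_{i,t-1} \le x_{i,t},\quad 0 \le x_{i,t} \le 1,
\]
whose feasible region contains the restriction of $\bar x$ to $T^h$. Let $\hat x$ be an optimal solution; then $\sum_{t,i} v'_i \hat x_{i,t} \ge \sum_{t,i} v'_i \bar x_{i,t}$ (sums over $i \in T^h$).

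Next I would show that some optimum $\hat x$ has at most one fractional coordinate in each period, via the same weight-preserving exchange used in Lemma \ref{onefrac}. If two items $i,j \in T^h$ were both fractional at some $t$, with $v'_i/w_i > v'_j/w_j$ (distinctness by the standard perturbation), then on the overlap range $[t^1,t^2]$ on which $j$ is present and $i$ has room to grow, one can shift a small $j$-mass into $i$-mass; the per-period capacity is left exactly unchanged, monotonicity (and hence the precedence constraints) in both $i$ and $j$ is preserved, and the objective strictly increases, contradicting optimality.

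Finally, I would round $\hat x$ down to the integral vector $x^{\sigma,h}$ on $T^h$. Capacity is maintained trivially; precedence is preserved because each sequence $\hat x_{i,1},\ldots,\hat x_{i,T}$, being monotone, is a (possibly empty) block of $0$'s, then a block of fractional values, then a block of $1$'s, and rounding the middle block down to $0$ keeps the sequence monotone. Since in each period at most one $T^h$-variable is rounded down, and each such item's value is below $(\epsilon/T)\,v_h$, the total loss over the $T$ periods is strictly less than $\epsilon v_h$, yielding
\[
\sum_{t=1}^T \sum_{i \in T^h} v'_i x^{\sigma,h}_{i,t}\ \ge\ \sum_{t=1}^T \sum_{i \in T^h} v'_i \hat x_{i,t} - \epsilon v_h\ \ge\ \sum_{t=1}^T \sum_{i \in T^h} v'_i \bar x_{i,t} - \epsilon v_h,
\]
as required. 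The main obstacle is the ``at most one fractional variable per period'' claim in the presence of both multi-period precedence and per-period capacity: the swap range $[t^1,t^2]$ must simultaneously respect both families of constraints for $i$ and $j$, which is the exact bookkeeping of Lemma \ref{onefrac} specialized to the $T^h$ sub-LP.
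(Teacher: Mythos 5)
Your proposal is correct and follows essentially the same route as the paper: the same auxiliary LP on the $T^h$ variables with per-period capacities $\sum_{i \in T^h} w_i \bar x_{i,t}$, the same ``at most one fractional variable per period'' structure, and the same round-down with total loss at most $T \cdot (\epsilon/T) v_h = \epsilon v_h$. The only difference is cosmetic: the paper asserts the one-fractional-per-period structure from the greedy (fractional-knapsack) form of the optimum, while you derive it via the exchange argument of Lemma~\ref{onefrac}; these are the same fact proved two ways.
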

\begin{proof}
Consider the following auxiliary LP:

\begin{align}
\text{max} \quad
\sum_{t =1}^{T} \sum_{i \in T^{h}} v'_{i}x_{i,t}  \notag \\
\text{s.t.} \quad
\sum_{i \in T^{h}}  w_ix_{i,t} \leq \sum_{i \in T^{h}} & w_{i}\bar{x}_{i,t} \quad \forall t \notag \\
x_{i, t-1} \leq x_{i, t} \quad &\forall i \in T^{h}, \text{and} \ t = 2, 3, \ldots, T \notag \\
x_{i, t-1} \in [0,1] \quad &\forall i, t. \notag \\  \notag
\end{align}
An optimal solution of the LP above would be to greedily pack items in the order of non-increasing value to weight ratio. Let $\hat{x}$ be such an optimal solution, then it is clear that $\hat{x}$ has at most one fractional variable in each time period. We round down such a fractional variable in each time period to $0$ to obtain an integer solution $x^{\sigma,h}$. Consequently, we have that 

\[\sum_{t=1}^{T} \sum_{i \in T^{h}} v'_i\hat{x}_{i,t} - \sum_{t=1}^{T} \sum_{i \in T^{h}} v'_ix^{\sigma}_{i,t} \leq \frac{\epsilon  v_h}{T}\sum_{t=1}^{T} 1 =  \epsilon v_h,\]
where the first inequality follows from the fact that every item in $T^{h}$ has value weakly less than $\epsilon v_h / T$. Rearrange the terms gives us that:

\[\sum_{t=1}^{T} \sum_{i \in T^{h}} v'_ix^{\sigma}_{i,t} \geq \sum_{t=1}^{T} \sum_{i \in T^{h}} v'_i\hat{x}_{i,t}  - \epsilon v_h \geq \sum_{t=1}^{T} \sum_{i \in T^{h}}  v'_i\bar{x}_{i,t} - \epsilon v_h.\]
\end{proof}

\tiny   Fri.Nov.15.114345.201

\end{document}